\newtheorem{theorem}{Theorem}
\newtheorem{lemma}[theorem]{Lemma}
\newtheorem{definition}{Definition}
\def\vE{\vec{E}}
\def\vG{\vec{G}}
\def\eqd{\equiv_\delta}
\def\eqk{\equiv_{\delta^*_k}}
\def\eqkp{\equiv_{\delta^*_{k+1}}}
\def\neqkp{\not\equiv_{\delta^*_{k+1}}}
\def\neqd{\not\equiv_\delta}
\def\neqk{\not\equiv_{\delta_k^*}}
\def\N{\mathbb{N}}
\def\delk{\delta^*_k}
\def\delkp{\delta^*_{k+1}}
\def\eqc{\equiv_{\delta^c}}
\begin{document}

\title{Finding Euler Tours in One Pass in the W-Streaming Model with O(n~log(n)) RAM}
\author{Christian Glazik \and Jan Schiemann \and Anand Srivastav}
\date{\small%
Department of Computer Science\\
Kiel University\\
Christian-Albrechts-Platz 4\\
24118 Kiel, Germany\\
\texttt{\{cgl,jasc,asr\}@informatik.uni-kiel.de}}
\maketitle

\begin{abstract}
\textbf{Abstract:}
  We study the problem of finding an Euler tour in an undirected graph $G$
  in the W-Streaming model with $\mathcal O(n\text{ polylog}(n))$ RAM,
  where $n$ resp.\ $m$ is the number of nodes resp.\ edges of $G$.
  Our main result is the first one pass W-Streaming algorithm computing an 
Euler 
tour of
  $G$ in the form of an edge successor function with only $\mathcal O(n 
\log(n))$
  RAM which is optimal for this setting (e.g., Sun and Woodruff (2015)).  The
  previously best-known result in this model is implicitly given by Demetrescu
  et al.\ (2010) with the parallel algorithm of Atallah and Vishkin (1984)
  using $\mathcal O(m/n)$ passes under the same RAM limitation.
  For graphs with $\omega(n)$ edges this is non-constant.

  Our overall approach is to partition the edges into edge-disjoint cycles and 
  to merge the cycles until a single Euler tour
  is achieved.
  Note that in the W-Streaming model such a merging is far from being obvious
  as the limited RAM allows the processing
  of only a constant number of cycles at once.
  This enforces us to merge cycles
  that partially are no longer present in RAM\@.
  Furthermore,  the successor of an edge cannot be changed
  after the edge has left RAM\@.
  So,
  we steadily have to output edges and their designated successors,
   not knowing the appearance of edges and cycles yet to come.
   We solve this problem with a special edge swapping
   technique, for which two certain edges per node
   are sufficient to merge tours without having
  all of their edges in RAM.
  Mathematically, this is controlled
  by structural results on the space of certain equivalence classes
  corresponding to cycles and the characterization of associated successor 
functions.
  For example, we give conditions under which
  the swapping of edge successors
  leads to a merging of equivalence classes.
  The mathematical methods of our analysis
  might be of independent interest
  for other routing problems in streaming models
\par
\end{abstract}


\section{Introduction}

For the processing of large graphs,
the \emph{graph streaming} or \emph{semi streaming} model introduced by
Feigenbaum et al.~\cite{Feigenbaum:2005:GPS:1132633.1132638}
has been studied extensively over the last decade.
In this model, a graph with $n$ nodes and $m$ edges
is given as a stream of its edges.
Random-access memory (RAM, also called internal memory)
is restricted to $\mathcal O(n \text{ polylog}(n))$ edges at a time,
see, e.g., the survey~\cite{McGregor}
for a detailed introduction.
In consequence, the model cannot be applied to problems
where the size of the solution
exceeds this amount of memory.
In the Euler tour problem,
we are looking for a closed trail in an undirected graph
such that each edge is visited exactly once.
Since the size of an Euler tour is $m$,
which might even be $\Theta(n^2)$,
we need a relaxation of the model
that allows us to store the output separated from the RAM\@.

\subsection{Previous Work on W-Streaming}
The \emph{W-Streaming model} introduced by Demetrescu et
al.~\cite{Dem09} is a relaxation of the classical streaming
model. 
At each pass, an output stream is written
which then becomes the input stream of the next pass.  In~\cite{Dem09}, a
trade-off between internal memory and streaming passes is shown for undirected
connectivity and the single-source shortest path problem in directed graphs.
The One-Pass-Model plays a special role, since in this case the
writing to the stream is only for output reasons,
because the stream is only processed once.
This is particularly interesting regarding problems with solutions which do 
not fit in RAM.
The W-Streaming model originated as a more restrictive alternative to the
\emph{StrSort model} introduced by Aggarwal et al.~\cite{Ruhl,Aggarwal}.

Finding an Euler tour in trees
has been studied in multiple papers (e.g.,~\cite{Dem10}),
but to the best of our knowledge the general Euler tour problem
has hardly been considered in a streaming model so far.
However, there are some general results for transferring PRAM algorithms to the 
W-Streaming model.
Atallah and Vishkin~\cite{Atallah} presented a PRAM algorithm for finding Euler 
tours,
using $\mathcal O(\log(n))$ time and $n+m$ processors.
Transferred to the W-Streaming model with the methods from~\cite{Dem10}, this 
algorithm
computes an Euler tour in the form of a bijective successor function 
within $p=\mathcal O(m\text{ polylog}(n)/s)$ passes,
where $s$ is the RAM-capacity.
For a RAM size of $\mathcal O(n\text{ polylog}(n))$, this translates to
$\Omega(m/n)$ passes which for any $m=\omega(n)$ is non-constant.  Furthermore,
Sun and Woodruff~\cite{Sun} showed that a one-pass streaming
algorithm for verifying whether a graph is Eulerian needs $\Omega(n \log(n))$
RAM\@. This implies that a one pass W-streaming algorithm for finding an Euler 
tour
with less RAM does not exist and therefore justifies our choice of the RAM size.

\subsection{Our Contribution}
We present the W-Stream algorithm \textsc{ Euler-Tour}
for finding an Euler tour in a graph in form of a bijective successor function 
or stating that the graph is not Eulerian, using only one pass and 
$\mathcal O(n \log(n))$ bits of RAM\@.
This is not only a significant improvement over previous results,
but is in the view
of the lower bound of Sun and Woodruff~\cite{Sun}
the first optimal algorithm in this setting.
Usually, the W-Streaming model is restricted to sub-linear internal memory but 
in our case the output stream is used solely for storing the solution 
which needs $\Omega(m)$ memory.
As in~\cite{Atallah}, our algorithm outputs the Euler tour 
as a successor function that for every edge gives the following edge of the 
tour.
Atallah and Vishkin~\cite{Atallah} find edge disjoint tours (in our case 
cycles) 
and connect them by 
pairwise swapping the successor edges of suitable edges.
This idea is easy to implement without memory restrictions
but the implementation gets distinctly more complicated
with limited memory space:
We cannot store all cycles in RAM\@.
Therefore, we have to output edges and their successors before finding resp.\ 
processing
all cycles.
Our idea is to keep specific edges of some cycles in RAM along with additional 
information so
that we are able to merge following cycles regardless of their appearance with 
already processed tours
which likely are no longer present in RAM\@.
\par
We develop a mathematical foundation by partitioning the edges into
equivalence classes induced by a given bijective successor function and prove
structural properties that allow to iteratively change this function 
on a designated set of edges so that the modified function
is still bijective. Translated to graphs this is a tour
merging process.  This mathematical approach is quite general and might be
useful in other routing scenarios in streaming models.

\subsection{Organization of the Article}  
In Section~\ref{sec:preliminaries} we give some basic definitions.
The main techniques of our algorithm are described in Section~\ref{sec:idea} 
in an intuitive manner.
Section~\ref{sec:algorithm} contains the pseudo
code of the algorithm.  
In the analysis in Section~\ref{sec:analysis},
we show the
connection of the concepts of Euler tours and successor functions
and then show that the required RAM 
of the algorithm does not
exceed $\mathcal O(n \log(n))$ and that the output actually depicts an Euler
tour (Theorem~\cref{finalthm}). 

\section{Preliminaries}\label{sec:preliminaries}

Let $\N:=\{1,2,\ldots\}$ denote the set of natural numbers.
For $n\in\N$ let $[n]:=\{1,\ldots,n\}$.
In the following, we consider a graph $G=(V,E)$
where $V$ denotes the set of nodes and $E$ the set of (undirected) edges.
A \emph{trail} in $G$ is a finite sequence $T=(v_1,\ldots,v_\ell)$ of nodes of 
$G$
with $\{v_i,v_{i+1}\}\in E$ and $v_i=v_j$ implies 
$v_{i+1}\notin\{v_{j-1},v_{j+1}\}$
for all $i\in\{1,\ldots,\ell-1\}$ and $j\in\{2,\ldots,\ell-1\}$ with $i\neq j$.
The \emph{length} of $T$ is $\ell-1$.
The (directed) edge-set of $T$ is $E(T):=\{(v_i,v_{i+1})|i\in[\ell-1]\}$.
We also write $e\in T$ instead of $e\in E(T)$.
For a directed edge $e$ we denote by $e_{(1)}$ its first and by $e_{(2)}$ its 
second component.
A trail $T=(v_1,\ldots,v_\ell)$ with $v_1=v_\ell$ is called a \emph{tour}.
In tours, we usually do not care about starting point and end point,
so we slightly abuse the notation and write $v_{i+k}$ resp. $v_{i-k}$ for any 
$k\in\N$,
identifying $v_0:=v_\ell$ and $v_{\ell+1}:=v_2$ and so on.
If additionally $v_i\neq v_j$ holds for all $i,j\in [\ell-1],i\neq j$
(and $\ell\geq 3$), we call $T$ a \emph{cycle}.
An \emph{Euler tour} of $G$ is a tour $T$ with $E(T)=E$.
Since in the streaming model the graph is represented as a set of edges, we 
often use the edges for the depiction of tours.
With $e_i:=\{v_i,v_{i+1}\}$ for all $i \in [l-1]$, $T$ can be written as 
$T=(e_1,\ldots,e_l)$.
Here, we also use the slightly abusive index notation.
Note that for the tour $T$ the edges are distinct.
For $i \in [l]$, we call $e_{i+1}$ the \emph{successor edge of $e_i$} in tour 
$T$.
Our algorithm outputs an Euler tour $T=(v_1,\ldots,v_{|E|},v_1)$ in form of 
  a \emph{successor function}, i.e.,
  for every $i \in [|E|]$, we output the triple $(v_i,v_{i+1},v_{i+2})$, where 
  $\{v_{i+1},v_{i+2}\}$ is the successor edge of $\{v_i,v_{i+1}\}$ in $T$.

\section{Idea of the Algorithm}\label{sec:idea}

As the analysis of our algorithm is quite involved, in this section
we try to explain the new algorithmic idea and where the mathematical analysis 
is required.
First we explain how merging of subtours can be accomplished without RAM 
limitation
clarifying why this does not work in W-streaming. Thereafter we explain our 
merging technique
and its locality and RAM efficiency.

\subsection{Subtour merging in unrestricted RAM}
Recall that an Euler tour will be presented by giving for every edge the 
corresponding
successor edge in the tour.
Let $G=(V,E)$ be an Eulerian graph and $T, T'$ be edge-disjoint tours in $G$.
The tour induces an orientation of the edges in a canonical way.
If $T$ and $T'$ have a common node $v$, it is easy to merge them to a 
single tour:
$T$ has at least one in-going edge $(u,v)$ with a successor edge $(v,w)$,
and $T'$ has at least one in-going edge $(u',v)$ with a successor edge 
$(v,w')$.
By changing the successor edge of $(u,v)$ from $(v,w)$ to $(v,w')$
and the successor edge of $(u',v)$ to $(v,w)$, we get a tour containing all 
edges of $T \cup T'$ (see Figure~\ref{fig:TwoTours}).
\begin{figure}
  \centering
  \fbox{
    \includegraphics[width=0.4\linewidth]{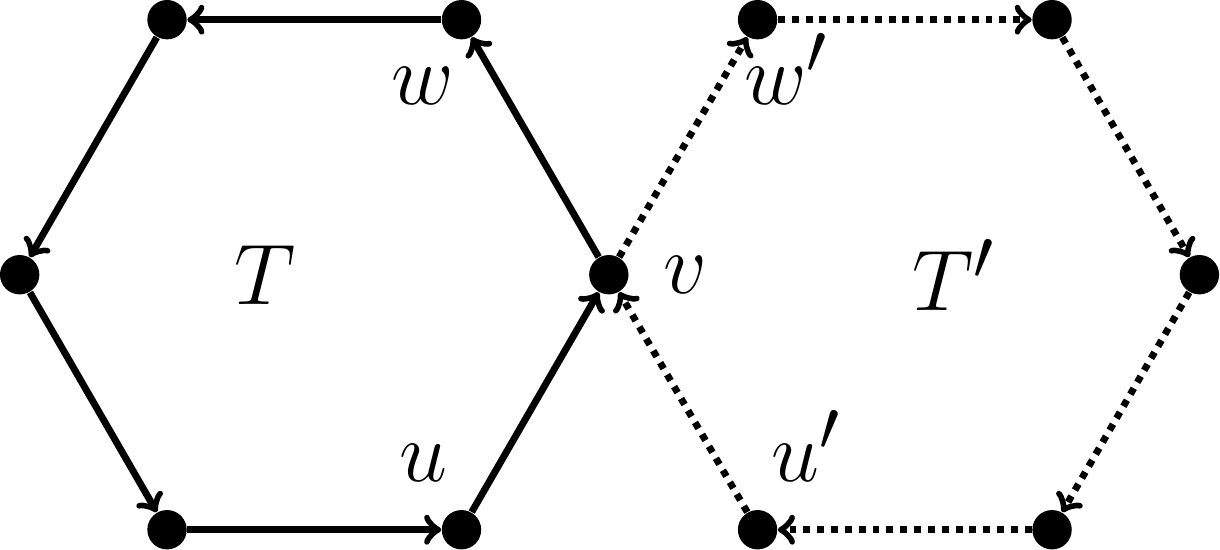}
  }
  \fbox{
    \includegraphics[width=0.4\linewidth]{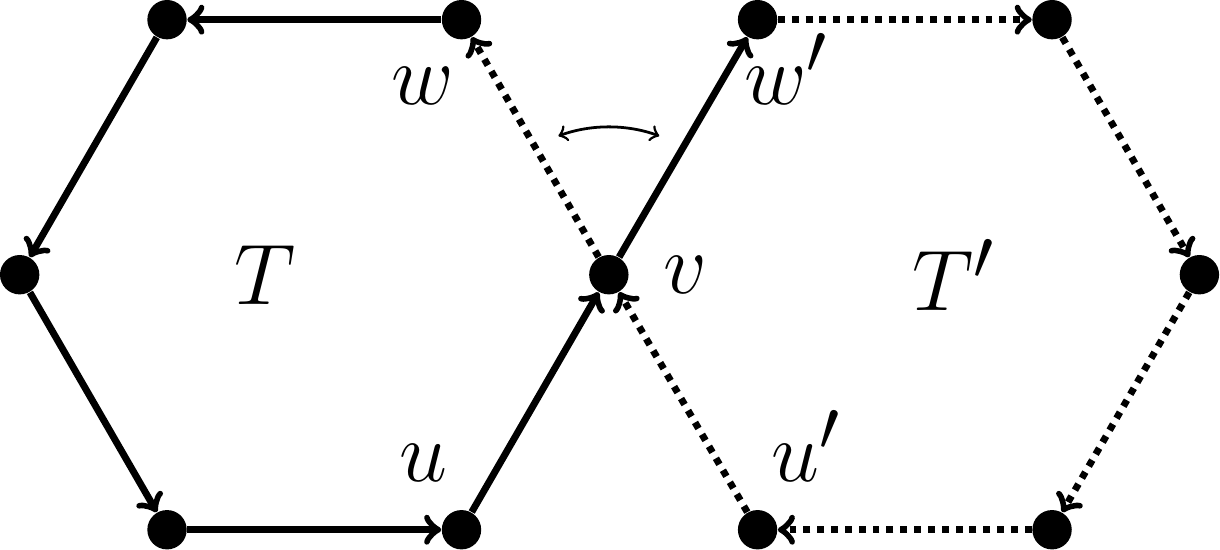}
  }
  \caption{Merging tours by swapping two edges}\label{fig:TwoTours}
\end{figure}
The same principle can be applied when merging more than two tours at once.
When we have a tour $T$ and tours $T_1,\ldots,T_k$, $k \in \mathbb N$, 
such that $T, T_1,\ldots,T_k$ are pairwise edge-disjoint
and for every $j \in [k]$ there is a common node $v_j$ of $T$ and $T_j$,
switching the successor edges of two in-going edges per node $v_j$ as described 
above 
creates a tour containing the edges of $T \cup T_1 \cup \cdots \cup T_k$.

We can use this method as a simple algorithm for finding an Euler tour: 
\par\noindent
a) Find a partition of $E$ into edge disjoint cycles.  \par\noindent
b) Iteratively pick a cycle $C$ and merge it with all tours encountered so 
far which have at least one common node with $C$.

Such a merging process certainly converges to a tour covering all nodes, if
a subtour obtained by merging some subtours does not decompose
later into 
some subtours again. If we use a local swapping technique to merge tours, 
this can very well happen, if swapping is again applied to some other node of 
the
merged tour 
(see Figure~\ref{fig:WrongSwap}). In the RAM model we can keep all tours 
in RAM and avoid such fatal nodes.
\begin{figure}
  \centering
  \fbox{
    \includegraphics[width=0.9\linewidth]{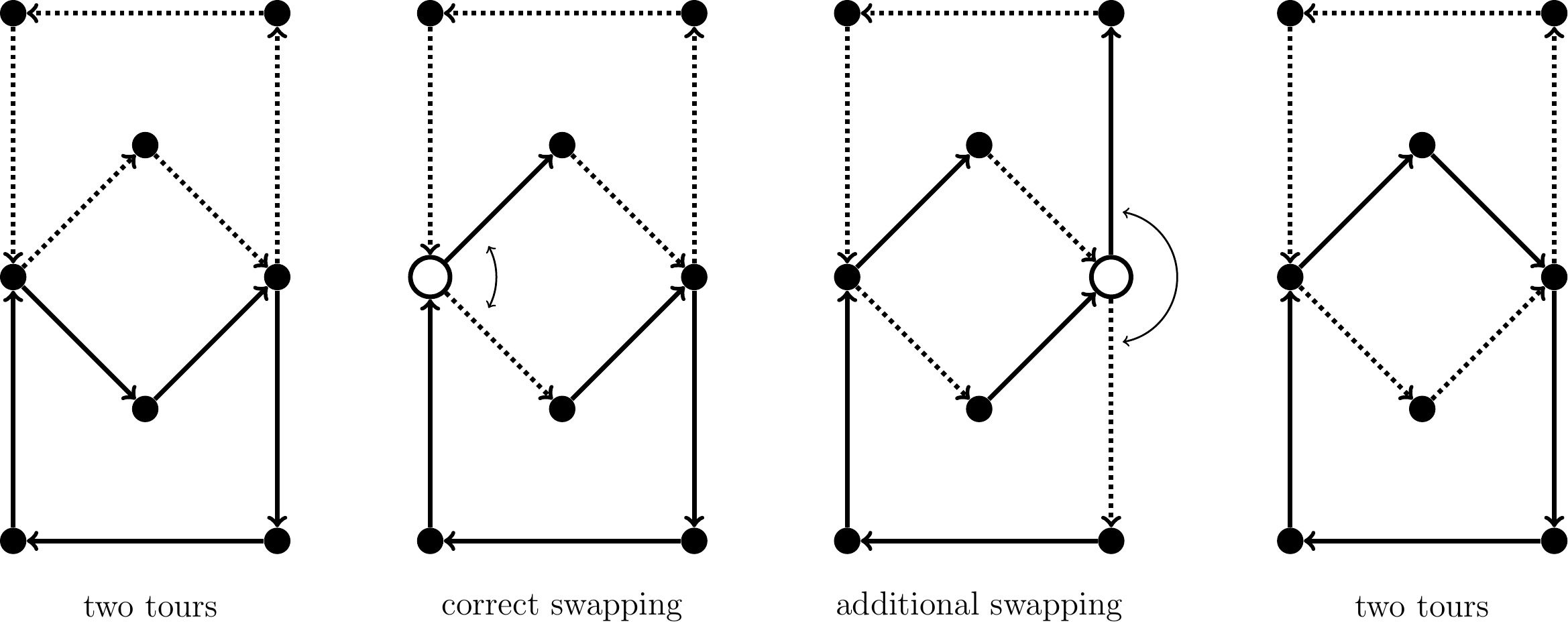}
  }
  \caption{Multiple edge-swapping destroys the merging 
    effect}\label{fig:WrongSwap}
\end{figure}

In the W-stream model with $O(n\log n)$ RAM it is far from being obvious 
how to implement an efficient tour merging for the
following reasons.
\begin{enumerate}
\item We cannot keep every intermediate tour in RAM, so we have to regularly 
  remove some edges together with their successors from RAM, 
  even if we do not know the edges yet to come.
  But on the other hand, we have to keep edges in RAM which are essential 
  in later merging steps.
\item Sometimes we have to merge cycles with tours that had already 
  left RAM. Therefore, we must keep track of common nodes and the related 
  edges.
\end{enumerate}

\subsection{Subtour merging in limited RAM}
Let us assume that we have found say four cycles $C_1,\ldots,C_4$ in that 
order, 
all 
sharing a common node $v$.
(see Figure~\ref{fig:OneNode}).
Let $(u_1,v),\ldots,(u_4,v)$ be the respective in-going edges and 
$(v,w_1),\ldots,(v,w_4)$ be the respective out-going edges.
By swapping the successor edges of $(u_1,v)$ and 
$(u_2,v)$ as explained before, we get a tour containing all edges from $C_1$ 
and $C_2$.
We then merge this tour with $C_3$ swapping the successor edges of 
$(u_1,v)$ and 
$(u_3,v)$, and then with $C_4$ by swapping the successors of $(u_1,v)$ and 
$(u_4,v)$. The successor edges are now as follows:
\begin{align*}
  &(u_1,v) \longrightarrow (v,w_4) & (u_2,v) \longrightarrow (v,w_1)\\
  &(u_3,v) \longrightarrow (v,w_2) & (u_4,v) \longrightarrow (v,w_3)
\end{align*}
For $i > 1$ and cycle $C_i$, the successor of the edge $(u_i,v)$ is 
edge $(v,w_{i-1})$, the out-going edge of $C_{i-1}$.
The edge $(u_1,v)$ of the cycle $C_1$ has the out-going edge of the 
last cycle as its successor edge.
The edge $(u_1,v)$ is the first in-going edge of $v$
called the \emph{first-in edge of $v$}.
\begin{figure}
  \centering
  \fbox{
    \includegraphics[width=0.95\linewidth]{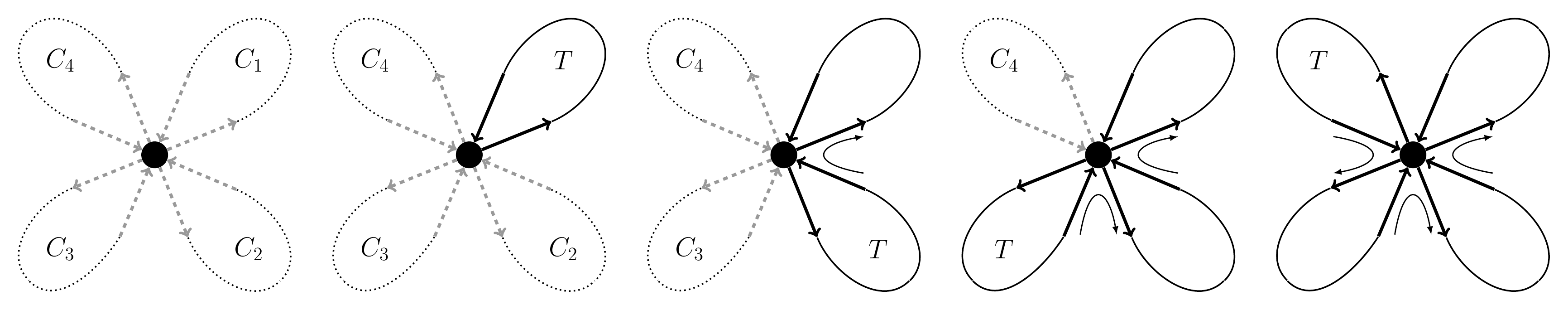}
  }
  \caption{Successive merging of cycles}\label{fig:OneNode}
\end{figure}
Let us briefly show how this merging can be implemented in W-streaming.
When $C_1$ is kept in RAM, we store the edge $(u_1,v)$, since we don't know its 
final successor edge yet.
We also keep the edge $(v,w_1)$ in RAM, because it will be the successor edge 
of $C_2$.
We call such an edge the \emph{potential successor edge of $v$}.
We can remove every edge except $(u_1,v)$ together with their respective 
successor 
edges in $C_1$, since only the successor edge of $(u_1,v)$ will change over the 
course of the algorithm.
Then iteratively, if we have a cycle $C_i$ for $i > 1$ in RAM, we assign the 
edge $(v,w_{i-1})$ as successor edge of $(u_i,v)$,
replace $(v,w_{i-1})$ by $(v,w_i)$ in RAM as 
potential successor edge of $v$ for the next cycle and then remove $C_i$ with 
the respective successor edges from RAM.
Finally, as no more cycles with node $v$ occur, we can remove $(u_1,v)$ 
together with 
the last successor edge left from RAM (in our case this is $(v,w_4)$).

Now, let us consider the more complicated case, 
where we wish to merge a cycle $C$ with multiple tours 
at several nodes.
Consider a cycle $C$ and tours $T_1,\ldots,T_j$. Let 
$v_1,\ldots,v_j$, $C$ be nodes so that $v_i$ belongs to $T_i$ and $C$ for all 
$i$.
We distinguish between merging at three types of nodes:
\begin{enumerate}
\item For the nodes $v_1,\ldots,v_j$.
  we use the successor edge 
  swapping.
\item Nodes in $C$ and in $T_1 \cup \cdots \cup T_j\backslash 
\{v_1,\ldots,v_j\}$:
  as only one successor edge swapping per tour is needed, these 
  additional common nodes are not used, so for every $v \in T_1 \cup \cdots 
  \cup T_j\backslash \{v_1,\ldots,v_j\}$ the in-going edge $(u,v)$ 
  of $C$ keeps its successor edge, so nothing happens here.
\item Nodes in $C\backslash (T_1 \cup \cdots \cup T_j)$.
  These nodes are visited by the algorithm for the first time.
  Since we might want to merge $C$ with future cycles at these nodes, we store 
  for every $v \in C\backslash (T_1 \cup \cdots \cup T_j)$ the in-going 
  edge $(u,v)$ of $C$ as first-in edge
  and the out-going edge $(v,w)$ of $C$ as potential successor edge.
\end{enumerate}
Note that the very first cycle found by the algorithm consists only of type 3 
nodes, so every edge will become a first-in edge.

The challenge in the analysis is on the one hand to choose sufficiently many
nodes where merging is done in a {\em simultaneous} way in order to
stay within the one-pass complexity, and on the other hand to ensure that
simultaneous merging enlarges and never decomposes subtours. Here we need two 
lemmas.  Lemma~\ref{technical2} is used to show that merging of equivalence 
classes of Euler subtours
leads to equivalence classes of a new subtour, thus subtour merging is 
invariant 
w.r.t. the equivalence class
relation. This lemma is needed to prove 
Lemma~\ref{mainLemma}, which shows that the sequence of successor 
functions 
iteratively built
by refining the equivalence relation are indeed Eulerian subtours. It also 
gives a criterion for belongingness of edges
to such a subtour. This criterion is finally used to show that the successor
function returned by our algorithm is equal to the successor function
associated to the last and most refined equivalence relation, hence is an 
Eulerian tour for the graph.


For the readers convenience we give a high level description of our algorithm.
A detailed description in pseudo code together with an outline of the analysis 
and the proof of the main theorem will follow in 
the next sections.
We denote the set of first-in edges by $F$.
  \begin{enumerate}[label=\arabic*.]
  \item Iteratively:
    \begin{enumerate}[label=\arabic*.]
    \item \label{step1} Read edges from the input stream until the edges 
      in RAM contain a 
      cycle $C$.
    \item \label{step2} If a node $v$ of $C$ is visited for the first time,
      \begin{itemize}
      \item[a)] store the in-going edge $(u,v)$ of $C$ in $F$ (we will process 
        these $\leq n$ edges in step~\ref{step7}),
      \item[b)] remember the out-going edge $(v,w)$ as potential successor edge 
      of 
        $v$.
      \end{itemize}
    \item \label{step3}
      Every node $v$ that has already been visited,
      has thereby been assigned to a unique tour $T$
      with $v\in C\cap T$.
      For each tour that shares a node with $C$, choose 
      exactly one common node.
    \item \label{step4} For each node $v$ chosen in step~\ref{step3}  
``swap 
the 
      successors''.
      That means, we write the in-going edge $e$ to the stream and take the 
      recent potential successor edge of $v$ as successor edge for $e$.
      Then,  save the out-going edge as new potential successor edge of $v$.
    \item \label{step5} For each edge that has not been stored in $F$ 
(step 
      \ref{step2}) or written to the stream (step \ref{step4}) so far, write 
this 
      edge to the stream and take as successor the following edge in $C$.
    \item \label{step6} All tours with common nodes together with 
      all newly 
      visited nodes are now assigned to a single tour.
    \end{enumerate}
  \item \label{step7} After the end of the input stream is reached,
    all edges have either been written to the stream
    or stored in $F$.
    For every edge $(u,v) \in F$, write it to the stream and take as its 
successor 
    the potential successor edge of $v$.
  \end{enumerate}
 
  An example of how the algorithm works can be found in 
the appendix.

\section{The Algorithm}\label{sec:algorithm}
To enable a clear and structured analysis,
in this section we present the pseudo-code for our algorithm.
For a better understanding
it is split up into several procedures
that correspond to the steps from our high level description in 
Section~\ref{sec:idea}.
Note that these procedures are not independent algorithms,
since they access variables from the main algorithm.
The output is an Euler tour on $G$,
given in the form of a successor function $\delta^*$.
To be more precise, the output is a stream of triples $(v_1,v_2,s)$
with $v_1,v_2,s\in V$ and $\{v_1,v_2\}\in E$.
Each of these triples represents the information $\delta^*((v_1,v_2))=(v_2,s)$.
If a triple $(v_1,v_2,s)$ is written to the stream,
we say that the edge $(v_2,s)$ is \emph{marked as successor} of the edge 
$(v_1,v_2)$.
For every node we store two important values during the algorithm:
The value $t(v)$ that gives the tour $v$ is assigned to at the moment
and the value $j(v)$ that indicates that $(v,j(v))$ is the potential successor 
edge of $v$.
\\
\begin{algorithm}[H]\caption{\textsc{Euler-Tour}}
  \SetKwInOut{Input}{input}
  \SetKwInOut{Output}{output}
  \Indm%
  \Input{Undirected graph $G=(V,E)$, edge by edge on a stream $S$}
  \Output{Euler tour on $G$, i.e.\ a successor function $\delta^*$, if there is 
one}
  \Indp%

  $c:=0$;~
  $F:=\emptyset$;~
  $E_{\mathrm{int}}:=\emptyset$;~
  for every $v\in V$: $j(v):=0,t(v):=0$\;
  
  \For{every edge $e$ on $S$}{
    $E_{\mathrm{int}}:=E_{\mathrm{int}}\cup \{e\}$\;
    \If{$G_{\mathrm{int}}=(V,E_{\mathrm{int}})$ contains a cycle $C$}{
node 
      \textsc{Merge-Cycle} ($C$)\;
    } 
  }
  \If{$E_{\mathrm{int}}=\emptyset$}{
    ERROR:\ At least one node with odd degree\;
  }
  \If{ there exist $u,v$ with $t(u)\neq t(v)\neq 0$}{
    ERROR:\ Graph not connected\;
  }
  \textsc{Write-F}\;
\end{algorithm}
The algorithm searches the stream for cycles
(Step~\ref{step1} in our high level description )
and whenever a cycle is found,
we will run the procedure \textsc{Merge-Cycle} on this cycle.
The procedure \textsc{Merge-Cycle} contains the steps~\ref{step2} to~\ref{step6}
and \textsc{Write-F} corresponds to step~\ref{step7}\\
\begin{procedure}[H]
  \caption{Merge-Cycle()} 
  \SetKwInOut{Input}{input}
  \SetKwInOut{Output}{output}
  \Indm%
  \Input{Ordered cycle $C=(v_1,\ldots, v_k)$ }
  \Indp%
  \textsc{New-Nodes} \;
  \textsc{Construct-J-M}\;
  \textsc{Merge}\;
  \textsc{Write}\;
  \textsc{Update}\;
  \For{every edge $e\in C$}{
    delete $e$ from $E_{\mathrm{int}}$
  }
\end{procedure}
The procedure \textsc{New-Nodes} implements step~\ref{step2}\ 
If a node $v$ is processed the very first time by the algorithm,
this is indicated by $t(v)=0$.
If this is the case, we store the corresponding in-going edge in the set $F$
and store the next node on the cycle in $j(v)$
(this is, the edge $(v,v_{i+1})$ becomes the potential successor of $v$).\\
\begin{procedure}[H]\caption{New-Nodes()} 
  \For{$i=1,\ldots,k$}{
    \If{$t(v_i)=0$}{
      $j(v_i)=v_{i+1}$\;
      $F=F\cup\{(v_{i-1},v_i)\}$\;
    }
  }
  
\end{procedure}
The procedure \textsc{Construct-J-M} is a realization of step~\ref{step3}\ 
For every value $j\neq 0$, we pick exactly one node $v$
with $t(v)=j$ if there is one.
These nodes are stored in $J$, their values are stored in $M$.
The nodes in $J$ are the ``chosen'' nodes we want to use for merging tours.
If two nodes already have the same value in $t$,
this means they are already part of the same tour (see Lemma~\ref{mainLemma})
and we want to avoid using both of them for merging.\\
\begin{procedure}[H]\caption{Construct-J-M()} 
  $M=\emptyset$;~
  $J=\emptyset$\;
  \For{$j=1,\ldots,|V|$}{
    \If{exists $i\in[k]$ with $t(v_i)=j$}{
      add exactly one $v_i$ with $t(v_i)=j$ to the set $J$\;
      $M=M\cup\{j\}$\;
    }
  }
\end{procedure}
In the following procedure \textsc{Merge},
we use the nodes from $J$ to merge all tours
that share a node in the cycle $C$ by edge-swapping (step~\ref{step4}).\\
\begin{procedure}[H]\caption{Merge()} 
  \For{each $v_i\in J$}{
    write $(v_{i-1},v_i,j(v_i))$ to the stream\;
    $j(v_i)=v_{i+1}$\;
  }
\end{procedure}
In the procedure \textsc{Write}, we take care of all the edges
that have not been stored in $F$
and have not been written to the stream in the procedure \textsc{Merge} 
(Step~\ref{step5}).\\
\begin{procedure}[H]\caption{Write()} 
  \For{each edge $(v_i,v_{i+1})\in C$ that has not been written to the stream 
or 
added to $F$}{
    write $(v_i,v_{i+1},v_{i+2})$ to the stream\;
  }
\end{procedure}
In the procedure \textsc{Update}
we update the $t$-values to implement step~\ref{step6}\ 
After this step we can be sure that any two nodes
$v,v'\in V$ with $t(v)=t(v')\neq 0$
belong to the same tour,
whereas $t(v)=0$ means that $v$ has not been processed so far.\\
\begin{procedure}[H]\caption{Update()} 
  $a:=0$\;
  \If{$M=\emptyset$}{
    $c:=c+1$\;
    $a:=c$\;
  }
  \Else{
    $a:=\min(M)$\;
  }
  \For{each $v\in V$}{
    \If{$t(v)\in M$}{
      $t(v):=a$\;
    }
  }
  \For{$i=1,\ldots,k$}{
    $t(v_i)=s$\;
  }
\end{procedure}

Finally, in the procedure \textsc{Write-F} (step~\ref{step7}), the first-in 
edges
that have been stored in $F$ during the algorithm
are written to the stream with proper successors.\\
\begin{procedure}[H]\caption{Write-F()} 
  \For{each edge $(u,v)\in F$}{
    write $(u,v,j(v))$ to the stream\;
  }
\end{procedure}

\section{Analysis}\label{sec:analysis}
\subsection{Subtour representation by equivalence classes}
In this subsection we present some basic definitions and results
that allow us to transfer the problem of tour merging in a graph
to the language of equivalence relations.
This will allow an elegant and clear analysis of our algorithm in 
Section~\ref{sec:analysis}.
\begin{definition}
  \begin{itemize}
  \item [(i)] Let $G=(V,E)$ be an undirected graph.
    An \emph{orientation} of the edges of $G$ is a function $R:E\rightarrow V^2$
    such that for every edge $\{u,v\}\in E$ either $R(\{u,v\})=(u,v)$ or 
$R(\{u,v\})=(v,u)$.
    So $R(G):=(V,R(E))$ is a directed graph.
  \item[(ii)] Let $\vec{G}=(V,\vE)$ be a directed graph.
    A \emph{successor function} on $\vG$ is a function
    $\delta:\vE\rightarrow \vE$ with ${\delta(e)}_{(1)}=e_{(2)}$ for all 
$e\in\vE$.
  \item[(iii)] Let $\vG=(V,\vE)$ be a directed graph with successor function 
$\delta$.
    We define the relation $\equiv_\delta$ on $\vE$ by
    $e\eqd e':\Leftrightarrow \exists k\in\mathbb{N}:\delta^k(e)=e'$,
    where $\delta^k$ denotes the $k$-wise composition of $\delta$.    
  \end{itemize}
\end{definition}
So $e\eqd e'$ means that $e'$ can be reached from $e$ by iteratively applying 
$\delta$.

\begin{lemma}\label{lem:eqrel}
  Let $\delta$ be a bijective successor function on a directed graph 
$\vG=(V,\vE)$.
  Then $\eqd$ is an equivalence relation on $\vE$. 
\end{lemma}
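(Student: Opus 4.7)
The plan is to verify the three axioms of an equivalence relation (reflexivity, symmetry, transitivity), exploiting two ingredients: the bijectivity of $\delta$ and the implicit finiteness of $\vE$ (since we are in a streaming model over a finite graph). The key auxiliary fact to establish first is that for every $e \in \vE$ there exists some $m \in \N$ with $\delta^m(e) = e$; once this is in hand, all three axioms fall out quickly.

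For the auxiliary fact, I would consider the forward orbit $e, \delta(e), \delta^2(e), \ldots$ Since $\vE$ is finite, there must exist indices $i < j$ with $\delta^i(e) = \delta^j(e)$. Applying bijectivity (hence injectivity) of $\delta$ exactly $i$ times on both sides cancels the initial $\delta^i$, yielding $\delta^{j-i}(e) = e$ with $j - i \in \N$. This immediately gives reflexivity: $e \eqd e$ with witness $k = j-i$.

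Transitivity is the easiest step and just uses composition of powers: if $\delta^k(e) = e'$ and $\delta^\ell(e') = e''$ with $k,\ell \in \N$, then $\delta^{k+\ell}(e) = \delta^\ell(\delta^k(e)) = e''$, and $k + \ell \in \N$. For symmetry, suppose $\delta^k(e) = e'$. Invoking the auxiliary fact, pick $m \in \N$ with $\delta^m(e) = e$, and therefore $\delta^{mj}(e) = e$ for every $j \in \N$. Choose $j$ large enough that $mj > k$ (e.g.\ $j = k+1$); then
\[
\delta^{mj - k}(e') \;=\; \delta^{mj - k}\bigl(\delta^k(e)\bigr) \;=\; \delta^{mj}(e) \;=\; e,
\]
with $mj - k \in \N$, giving $e' \eqd e$.

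The only subtle point — and what I regard as the main obstacle — is symmetry, because the definition uses $k \in \N = \{1,2,\ldots\}$ rather than allowing $k = 0$, so one cannot symmetrize by inverting $\delta$ in a single step. The trick above of taking a multiple of the orbit length to go ``backwards by going forwards'' is what makes the restriction to positive powers harmless, and it is exactly here that bijectivity and finiteness are both essential: without bijectivity the orbit argument producing $m$ fails, and without finiteness the orbit need not close up at all.
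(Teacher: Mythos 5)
Your proposal is correct and follows essentially the same route as the paper: finiteness of $\vE$ plus injectivity of $\delta$ yields a positive return time $\delta^m(e)=e$, which gives reflexivity and lets you realize symmetry by ``going backwards via forward powers,'' with transitivity by composing powers. Your symmetry step (taking a multiple $mj>k$ of the return time) is in fact slightly more robust than the paper's, which compares a minimal $k$ against the return time $k'$ and asserts $k<k'$ without comment.
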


\begin{proof}
  Reflexivity: Let $e\in \vE$. Since $\vE$ is finite, there exists $k\in\N$ 
with 
the following property:
  There exists $k'\in\N$  with $k'<k$ and $\delta^k(e)=\delta^{k'}(e)$.
  Let $k$ be minimal with this property.
  Since $\delta$ is injective,
  it follows that $\delta^{k-1}(e)=\delta^{k'-1}(e)$
  and the minimality of $k$ enforces that $k'-1\notin\N$.
  So $k'=1$, therefore
  $\delta^k(e)=\delta(e)$ and by injectivity of $\delta$ we have 
$\delta^{k-1}(e)=e$.\par
  Symmetry: Let $e,e'\in\vE$ with $e\eqd e'$.
  Then there exists a minimal $k\in\N$ with $\delta^k(e)=e'$.
  As shown above, there also exists a $k'\in\N$ with $\delta^{k'}(e)=e$.
  Because $k$ is minimal, we have $k<k'$.
  It follows that $\delta^{k'-k}(e')=\delta^{k'}(e)=e$.\par
  Transitivity: Let $e,e',e''\in\vE$ with $e\eqd e'$ and $e'\eqd e''$.
  Then there exist $k_1,k_2\in\N$ with $\delta^{k_1}(e)=e'$
  and $\delta^{k_2}(e')=e''$. So we have $\delta^{k_1+k_2}(e)=e''$.  
\end{proof}

We denote the equivalence class of an edge $e\in\vE$ w.r.t. $\eqd$ by 
${[e]}_\delta$.

The following lemma is necessary to show that the equivalence classes of 
$\delta$
always form tours on $\vG$.
\begin{lemma}\label{lem:technical1}
  Let $\vG=(V,\vE)$ be a directed graph with bijective successor function 
$\delta$
  and the related equivalence relation $\eqd$.
  Then we have:
  \begin{itemize}
  \item[(i)]
    Let $e\in \vE$ and $k_1,k_2\in\N_0$ with $k_1\neq k_2$
    and $\delta^{k_1}(e)=\delta^{k_2}(e)$.
    Then $|k_1-k_2|\geq|{[e]}_\delta|$.
  \item[(ii)]
    For any $e\in\vE$ we have $\delta^{|{[e]}_\delta|}(e)=e$. 
  \end{itemize}
\end{lemma}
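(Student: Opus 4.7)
The plan is to introduce the ``order'' of $e$ under $\delta$, namely the minimum positive integer $m^\ast = m^\ast(e)$ such that $\delta^{m^\ast}(e) = e$. This quantity exists and is finite: the argument from the reflexivity part of Lemma~\ref{lem:eqrel} already produces some $k \in \N$ with $\delta^{k}(e) = e$ by combining finiteness of $\vE$ with injectivity of $\delta$, so one can take $m^\ast$ to be the least such $k$. Once $m^\ast$ is in hand, both assertions reduce to identifying $m^\ast$ with $|[e]_\delta|$.

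First I would prove the key auxiliary fact that the iterates $e, \delta(e), \delta^2(e), \ldots, \delta^{m^\ast-1}(e)$ are pairwise distinct: if $\delta^i(e) = \delta^j(e)$ for $0 \le i < j \le m^\ast - 1$, then applying injectivity of $\delta$ (and hence of $\delta^i$) $i$ times yields $e = \delta^{j-i}(e)$ with $0 < j-i < m^\ast$, contradicting the minimality of $m^\ast$. This already implies part~(i): given $k_1 \ne k_2$ with $\delta^{k_1}(e) = \delta^{k_2}(e)$, bijectivity gives $\delta^{|k_1-k_2|}(e) = e$, hence $m^\ast \le |k_1 - k_2|$, and it will then suffice to show $|[e]_\delta| = m^\ast$.

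Next I would show $|[e]_\delta| = m^\ast$ by exhibiting the equivalence class explicitly. One inclusion is immediate: $\delta(e), \ldots, \delta^{m^\ast - 1}(e)$ all lie in $[e]_\delta$ by definition of $\eqd$, and $e$ itself does because $\eqd$ is reflexive by Lemma~\ref{lem:eqrel}. For the reverse inclusion, any $e' \eqd e$ satisfies $e' = \delta^{k}(e)$ for some $k \in \N$; writing $k = q m^\ast + r$ with $0 \le r < m^\ast$ and using $\delta^{m^\ast}(e) = e$ repeatedly, I get $e' = \delta^r(e) \in \{e, \delta(e), \ldots, \delta^{m^\ast-1}(e)\}$. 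Combined with the distinctness above, this gives $|[e]_\delta| = m^\ast$, which closes part~(i) and yields part~(ii) directly since $\delta^{m^\ast}(e) = e$ by construction.

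The only delicate point, and the place where I expect to have to be careful, is the passage from the semigroup definition $e \eqd e' \iff \exists k \in \N : \delta^k(e) = e'$ (where $k$ must be positive) to statements involving $\delta^0 = \mathrm{id}$ and the representative $e$ itself; the previous lemma's reflexivity argument handles this, and the division-with-remainder step together with $\delta^{m^\ast}(e) = e$ keeps all exponents in $\N$. No further tools beyond bijectivity of $\delta$, finiteness of $\vE$, and Lemma~\ref{lem:eqrel} are required.
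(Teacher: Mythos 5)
Your proof is correct, but it is organized differently from the paper's. You introduce the order $m^\ast$ of $e$ (the least positive exponent with $\delta^{m^\ast}(e)=e$, whose existence you correctly extract from the reflexivity argument of Lemma~\ref{lem:eqrel}), show that $e,\delta(e),\ldots,\delta^{m^\ast-1}(e)$ are pairwise distinct and that this orbit is exactly ${[e]}_\delta$, and then read off both (i) and (ii) from the single identity $|{[e]}_\delta|=m^\ast$. The paper never names the order: it proves (i) directly by contradiction, assuming $0<|k_1-k_2|<|{[e]}_\delta|$ and showing that the set $M=\{\delta^k(e)\mid k_2\le k<k_1\}$ of at most $k_1-k_2$ elements would have to contain all of ${[e]}_\delta$ (via the same division-with-remainder step you use), and then derives (ii) from (i) by a separate pigeonhole case analysis. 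The ingredients --- injectivity, finiteness of $\vE$, Euclidean division --- are identical, but your decomposition is arguably cleaner: the explicit description of ${[e]}_\delta$ as the orbit of length $m^\ast$ is a slightly stronger intermediate statement, it replaces two proofs by contradiction with a direct computation, and it makes (ii) immediate. The one delicate point you flag (that the definition of $\eqd$ only quantifies over positive exponents, so membership of $e$ in its own class and the $r=0$ case of the remainder argument need justification) is indeed covered by the reflexivity established in Lemma~\ref{lem:eqrel}, so there is no gap.
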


\begin{proof}
  (i):
  Assume for a moment that there exist $e\in\vE$ and $k_1,k_2\in\N$
  with $\delta^{k_1}(e)=\delta^{k_2}(e)$ and $0<|k_1-k_2|<|{[e]}_\delta|$.
  Without loss of generality let $k_1>k_2$.
  We have $\delta^{k_1-k_2}(\delta^{k_2}(e))=\delta^{k_1}(e)=\delta^{k_2}(e)$
  and via induction for every $s\in\N$, we get
  $\delta^{s(k_1-k_2)}(\delta^{k_2}(e))$ $=\delta^{k_2}(e)$.
  For the set $M:=\{\delta^k(e)|k_2\leq k<k_1\}$,
  we have $|M|\leq k_1-k_2<|{[e]}_\delta|$.
  But on the other hand, we also have ${[e]}_\delta\subseteq M$:
  Let $e'\in{[e]}_\delta={[\delta^{k_2}(e)]}_\delta$. Let $n\in\N$ with 
$e'=\delta^{n}(\delta^{k_2}(e))$.
  Then there exist unique $s,r\in\N_0$ with $0\leq r< k_1-k_2$ and 
$n=s(k_1-k_2)+r$.
  So
  
\[e'=\delta^n(\delta^{k_2}(e))=\delta^{r}(\delta^{s(k_1-k_2)}(\delta^{k_2}
(e)))=\delta^r(\delta^{k_2}(e))=\delta^{k_2+r}(e)\in M.\]
  Now we have 
  $|M| \leq k_1 - k_2 < |{[e]}_{\delta}| \leq |M|$,
  a contradiction.
  \par(ii):
  Assume that there exists $e\in\vE$ with $\delta^{|{[e]}_\delta|}(e)=e'\neq e$.
  Define $M:=\{\delta^k(e)|1\leq k\leq|{[e]}_\delta|\}$.
  Clearly, $M\subset {[e]}_{\delta}$.\par
  Case 1: $e\in M$. Then $\delta^0(e)=e=\delta^k(e)$ for some $k$
  with $1\leq k<|{[e]}_\delta|$, in contradiction to (i).\par
  Case 2: $e\notin M$. Then $|M|<|{[e]}_\delta|$,
  By pigeon hole principle, there exist $1\leq k_1,k_2\leq|{[e]}_\delta|$
  with $\delta^{k_1}(e)=\delta^{k_2}(e)$
  in contradiction to (i).\par
\end{proof}

\begin{theorem}\label{thm:classifyEulerTour}
  Let $\vG=(V,\vE)$ be a directed graph with bijective successor function 
$\delta$
  such that $e\eqd e'$ for all $e,e'\in\vE$.
  Then $\delta$ determines an Euler tour on $\vG$ in the following sense:
  For every $e\in \vE$ the sequence 
$(e_{(1)},{\delta(e)}_{(1)},\ldots,\delta^{|\vE|}{(e)}_{(1)})$ is an Euler tour 
on $\vG$.
\end{theorem}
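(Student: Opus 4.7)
The plan is to show that the sequence is (a) a closed walk, (b) uses each directed edge exactly once, and (c) therefore is an Euler tour. All three ingredients follow almost directly from the two preparatory lemmas.

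First I would observe that by hypothesis $\vE$ is a single equivalence class of $\eqd$, so $|{[e]}_\delta| = |\vE|$ for every $e \in \vE$. Applying Lemma~\ref{lem:technical1}(ii) then gives $\delta^{|\vE|}(e) = e$, which immediately yields $\delta^{|\vE|}(e)_{(1)} = e_{(1)}$, so the sequence starts and ends at the same node and is therefore closed.

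Next I would check that consecutive entries form edges of $\vG$. By the defining property of a successor function, $\delta(f)_{(1)} = f_{(2)}$ for every $f \in \vE$. Applying this with $f = \delta^{i-1}(e)$, the $i$-th entry $\delta^{i-1}(e)_{(1)}$ and the $(i{+}1)$-th entry $\delta^i(e)_{(1)} = \delta^{i-1}(e)_{(2)}$ are the endpoints of the directed edge $\delta^{i-1}(e) \in \vE$. Hence the walk traverses exactly the edges $e, \delta(e), \ldots, \delta^{|\vE|-1}(e)$ in order.

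It remains to argue that these $|\vE|$ edges are pairwise distinct, which combined with $|\vE|$ slots means each edge of $\vG$ appears exactly once. Here I would invoke Lemma~\ref{lem:technical1}(i): if $\delta^{k_1}(e) = \delta^{k_2}(e)$ for $k_1,k_2 \in \{0,\ldots,|\vE|-1\}$ with $k_1 \neq k_2$, then $|k_1 - k_2| \geq |{[e]}_\delta| = |\vE|$, contradicting the range. So all $|\vE|$ edges are distinct, and together with the closure from step one this means the walk is a tour covering every edge of $\vE$, i.e.\ an Euler tour of $\vG$. I do not expect a main obstacle here; the only mildly subtle point is recognising that the trail/non-repetition condition inherited from the preliminaries is automatic once edge-distinctness of the directed walk is established, since any repetition of an undirected edge would force a repetition of the corresponding directed edge $\delta^{i}(e)$.
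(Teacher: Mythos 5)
Your proposal is correct and follows essentially the same route as the paper's own proof: both establish closedness via Lemma~\ref{lem:technical1}(ii) and pairwise distinctness of the $|\vE|$ traversed edges via Lemma~\ref{lem:technical1}(i), using $[e]_\delta = \vE$. The only difference is the order of the two steps, which is immaterial.
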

\begin{proof}
  Let $e\in\vE$. Note that ${[e]}_\delta=\vE$.
  The sequence 
  $(e_{(1)},{\delta(e)}_{(1)},\ldots,\delta^{|\vE|}{(e)}_{(1)})$
  consists of $|\vE|$ edges, namely
  $e,\delta(e),\ldots,\delta^{|\vE|-1}(e)$.
  These edges are pairwise distinct:
  Otherwise, we would have $\delta^{k_1}(e)=\delta^{k_2}(e)$
  for some $k_1,k_2\in\{0,\ldots,|\vE|-1\}$.
  Hence, $|k_1-k_2|<|\vE|$
  in contradiction to Lemma~\ref{lem:technical1} (i).
  So the sequence is a trail.
  By applying Lemma~\ref{lem:technical1} (ii), we get
  $e=\delta^{|{[e]}_\delta|}(e)=\delta^{|\vE|}(e)$,
  thus the trail is a tour on $\vG$ and since it has length $|\vE|$,
  it is an Euler tour on $\vG$.
\end{proof}

Before we start with a detailed memory- and correctness analysis,
we show that at the end of the algorithm, every edge  $\{u,v\}\in E$
has been written to the output stream exactly once,
either in the form $(u,v)$ or in the form $(v,u)$.
We also show that $|E_{\mathrm{int}}|\leq n$ all the time.

\begin{lemma}\label{lem:everyEdgeWritten}
  \begin{itemize}
  \item[(i)] After each processing of an edge (lines $2$ to $5$ in \textsc{ 
Euler-Tour})
    in the algorithm, the graph $G_{\mathrm{int}}=(V,E_{\mathrm{int}})$ is 
cycle-free so
    $|E_{\mathrm{int}}|\leq n$.  If all nodes from $V$ have even degree in $G$, 
after
    completion of \textsc{ Euler-Tour}, $E_{\mathrm{int}}=\emptyset$.
  \item[(ii)] If all nodes from $V$ have even degree in $G$,
    after completion of \textsc{ Euler-Tour} every edge $\{u,v\}\in E$
    has been written to the stream either in the form $(u,v,s)$
    or in the form $(v,u,s)$ for some $s\in V$.
  \end{itemize}
\end{lemma}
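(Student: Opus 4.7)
The plan is to handle part (i) by induction on the number of edges read from the stream, and then to derive part (ii) as a bookkeeping consequence combined with a case analysis of the sub-procedures. For (i), initially $E_{\mathrm{int}}=\emptyset$ is trivially cycle-free. Inductively, suppose $G_{\mathrm{int}}$ is a forest just before a new edge $e$ is read. After the assignment $E_{\mathrm{int}}:=E_{\mathrm{int}}\cup\{e\}$ the graph is either still a forest (if the endpoints of $e$ lie in different trees) or contains exactly one cycle $C$ (if they lie in the same tree). In the latter case \textsc{Merge-Cycle}$(C)$ is invoked and its final loop deletes every edge of $C$ from $E_{\mathrm{int}}$; what remains is a subgraph of the previous forest and is therefore again cycle-free. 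In particular $|E_{\mathrm{int}}|\leq n-1\leq n$ at the end of each iteration of the main loop.

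For the emptiness claim I would track the parity of the degrees in $G_{\mathrm{int}}$. Each edge insertion flips the parity at both endpoints, while each cycle removal decreases the degree of every node on the cycle by exactly $2$ and hence preserves all parities. Consequently, once the entire stream has been read, $\deg_{G_{\mathrm{int}}}(v)\equiv \deg_G(v)\pmod{2}$ for every $v$, and the right-hand side is even by hypothesis. Since a non-empty forest always possesses a leaf of odd degree, the even-degree condition forces $E_{\mathrm{int}}=\emptyset$.

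For (ii), part (i) implies that every edge $\{u,v\}\in E$ eventually belongs to some cycle $C$ processed by \textsc{Merge-Cycle} and is removed from $E_{\mathrm{int}}$ in exactly one such call. I would then check that, within that unique call, the edge is recorded exactly once in one of the allowed orientations. In its oriented form $(v_{i-1},v_i)\in C$ the edge is added to $F$ by \textsc{New-Nodes} if $t(v_i)=0$, written to the stream by \textsc{Merge} if $v_i\in J$ (which forces $t(v_i)\neq 0$, so this case is disjoint from the first), and otherwise output by \textsc{Write}, whose guard explicitly excludes the two previous cases. Edges placed into $F$ are flushed to the stream by \textsc{Write-F} after the main loop ends, completing the argument.

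I expect the main obstacle to be the parity step in (i): it hinges on the structural observation that every node on a cycle is incident to exactly two edges of that cycle, so removing it subtracts $2$ from the degree there. Once this invariant is cleanly stated, both the emptiness of $E_{\mathrm{int}}$ and the exhaustiveness claim in (ii) follow directly from the induction.
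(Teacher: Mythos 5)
Your proposal is correct and follows essentially the same route as the paper: induction on processed edges for cycle-freeness, the parity-preservation argument (cycle deletion removes exactly two edges at each cycle node, and a non-empty forest has an odd-degree leaf) for $E_{\mathrm{int}}=\emptyset$, and the exhaustive case analysis of \textsc{New-Nodes}, \textsc{Merge}, \textsc{Write}, and \textsc{Write-F} for part (ii). Your write-up is slightly more explicit than the paper's about the disjointness of the three cases in (ii), but the substance is identical.
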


\begin{proof}
  We start by proving the first part of (i) via induction
  over the number of already processed edges.
  If there are no edges processed so far, then $E_{\mathrm{int}}=\emptyset$,
  so $G_{\mathrm{int}}$ is cycle-free.
  Now let $k\in[|E|]\cup\{0\}$, let $G_k,G_{k+1}$ denote $G_{\mathrm{int}}$
  after $k$ resp.\ $k+1$ edges have been processed
  and let $G_k$ be cycle-free.
  Let $e$ denote the $(k+1)$-th processed edge.
  When $e$ is added to $G_{\mathrm{int}}$, it may produce a cycle $C$.
  If $e$ does not produce a cycle, then $G_{k+1}=G_k\cup\{e\}$ is cycle-free
  and we are done.
  If $e$ produces a cycle $C$,
  then (at lines $6,7$ in \textsc{Merge-Cycle})  $C$ is deleted from 
$E_{\mathrm{int}}$
  and because $e\in C$, we get $G_{k+1}=(G_k\cup\{e\})\setminus C\subseteq G_k$
  and we are done by the induction hypothesis.\par
  Now assume for a moment that $E_{\mathrm{int}}\neq \emptyset$
  at the end of \textsc{ Euler-Tour}.
  We know that $G_{\mathrm{int}}$ is cycle-free at this time,
  so $G_{\mathrm{int}}$ contains a node with odd degree in $G_{\mathrm{int}}$.
  Because we always delete whole cycles,
  the degree of this node in $G$ has to be odd as well,
  but then $G$ is not an Eulerian graph.
  In this case we might output a message that $G$ does not contain an Euler 
tour.\par
  About (ii). During \textsc{ Euler-Tour}, every edge from $E$ is added to 
$E_{\mathrm{int}}$
  at some point of time
  and there is only one way for an edge to be deleted from $E_{\mathrm{int}}$ 
again,
  namely in line $7$ of \textsc{Merge-Cycle}.
  At that point of time, the edge has either been written to the stream
  in \textsc{Merge} or \textsc{Write} (in which case we are done)
  or it has been added to $F$ in \textsc{New-Nodes}.
  In that case it is written to the stream in \textsc{Write-F}.
  Because, according to (i), $E_{\mathrm{int}}=\emptyset$ at the end of 
\textsc{ 
Euler-Tour},
  at this point of time, every edge must have been written to the stream
  in exactly one of the two ways.
\end{proof}

The idea of (i) is that every
time a cycle occurs in $E_{\mathrm{int}}$, we delete this cycle so we assure 
that
$E_{\mathrm{int}}$ becomes cycle-free again (since we only add one edge at a
time).

\subsection{Memory Requirement}

For the memory estimation, we have to consider
the variables $j(v),t(v)$ for all $v\in V$,
the sets $F,E_{\mathrm{int}},J$, and $M$,
and the counter $c$.
By Lemma~\ref{lem:everyEdgeWritten} (i), $|E_{\mathrm{int}}|\leq n$
and, with some straightforward considerations,
we can estimate the memory requirement for the other parameters
leading to the following lemma.

\begin{lemma}\label{lem:memory}
  Algorithm~\textsc{Euler-Tour} needs at most $\mathcal{O}(n\log n)$ bits of 
RAM\@.
\end{lemma}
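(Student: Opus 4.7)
The plan is to inventory every data structure that the algorithm keeps in RAM, bound its cardinality, and multiply by the $\mathcal O(\log n)$ bits needed to store a single vertex or small index. The persistent structures are the node-indexed arrays $j$ and $t$, the sets $F$ and $E_{\mathrm{int}}$, and the global counter $c$; the only transient structures, alive for the duration of a single \textsc{Merge-Cycle} call, are $J$ and $M$.

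For $E_{\mathrm{int}}$, Lemma~\ref{lem:everyEdgeWritten}(i) already gives $|E_{\mathrm{int}}|\leq n$. For $F$, I would observe that an edge is added only inside \textsc{New-Nodes} and only when $t(v_i)=0$; since \textsc{Update} afterwards sets $t(v_i)\neq 0$ for every cycle vertex, each vertex contributes at most one edge to $F$, so $|F|\leq n$. The bounds $|J|,|M|\leq n$ follow directly from the loops in \textsc{Construct-J-M}. The array $j$ stores one vertex per vertex of $V$, contributing $\mathcal O(n\log n)$ bits. Each stored edge uses $2\lceil\log_2 n\rceil$ bits, so $E_{\mathrm{int}}$ and $F$ each cost $\mathcal O(n\log n)$ bits as well.

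The main obstacle is bounding the $t$-value range, equivalently the counter $c$. I plan to argue $c\leq n/3$ as follows. The counter is incremented only in \textsc{Update}, and only when $M=\emptyset$; by the construction of $M$ in \textsc{Construct-J-M}, this happens exactly when every vertex of the current cycle $C$ still has $t(v)=0$. After the increment, those vertices are reassigned $t(v)=c\neq 0$ in the final loop of \textsc{Update}, so none of them can ever trigger another increment. Since a cycle has length at least three, each increment consumes at least three previously-unassigned vertices, giving $c\leq n/3$ throughout the run. Hence every $t(v)$ lies in $\{0,\ldots,\lfloor n/3\rfloor\}$ and fits in $\mathcal O(\log n)$ bits, so the $t$-array also uses $\mathcal O(n\log n)$ bits. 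Summing all contributions yields the claimed total of $\mathcal O(n\log n)$ bits.
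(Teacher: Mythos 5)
Your proposal is correct and follows essentially the same route as the paper: bound $|E_{\mathrm{int}}|$ via Lemma~\ref{lem:everyEdgeWritten}, bound $|F|,|J|,|M|$ by $n$, and control the $t$-range by showing $c\leq n/3$ (each increment requires a cycle all of whose nodes are fresh, and such nodes are consumed three at a time). The only compression is your final ``Hence'' for $t(v)\leq n/3$, which the paper spells out as a minimal-counterexample induction using the fact that $t$-values are only ever set to $c$ or to $\min(M)$, itself a previously assigned $t$-value; this step is easy but worth stating.
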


\begin{proof}
  We consider the different parameters.
  \par
  About $c$: We show that $c\leq n/3$ at every time,
  so $\log n$ bits suffice to store $c$.
  $c$ is initiated with $0$
  and changed in the procedure \textsc{Update}
  if and only if $M=\emptyset$ at that point of time.
  This only happens
  if for every node $v$ of the considered cycle, we have $t(v)=0$,
  which means that none of the cycle nodes
  was considered before.
  This case can occur at most $n/3$ times
  during the algorithm, because there can be no more than $n/3$ node disjoint 
cycles in $G$, so $c\leq n/3$.  
  \par
  About $j(v)$: In this variable we store the label of a node,
  so for fixed $v$, $\log n$ bits suffice
  and altogether $n\log n$ bits suffice.
  \par
  About $t(v)$: We prove that for any $v\in V$
  $t(v)\leq n$ at any time:
  Assume for a moment that this is not the case.
  Consider the first point of time $T$
  in which $t(v)$ is set to a value $>n$ for some $v\in V$.
  $t(v)$ is only changed in the procedure \textsc{Update},
  line $9$ or $11$.
  In both cases the value is set to $r$
  which is either $c$ (line $4$) or $\min(M)$ (line $6$).
  We already showed $c<n$.
  Hence, by our assumption,
  $\min(M)>n$ at that point of time.
  But this implies that at the time of the construction of $M$,
  there already existed a node $u\in V$ with $t(u)>n$
  in contradiction to the choice of $T$.
  \par
  About $E_{\mathrm{int}},F,J,M$:
  Because a single element of each of these sets
  can be stored in $\log n$ bits,
  it suffices to show that the cardinalities of these sets
  do not exceed $n$.
  For $E_{\mathrm{int}}$, this is shown in Lemma~\ref{lem:everyEdgeWritten}.
  For $J$ and $M$, it follows directly from the construction
  (see Procedure \textsc{Construct-J-M}).
  In the set $F$, for every node we collect the first edge
  that leads into this node (see Procedure \textsc{New-Nodes}, lines $2$ and 
$4$),
  so clearly $|F|\leq n$.
\end{proof}

\subsection{Correctness}

In this subsection, we prove that $\delta^*$ determines an Euler tour on $G$,
provided that $G$ is Eulerian (Theorem~\ref{finalthm}).
This is done with the help of Theorem~\ref{thm:classifyEulerTour},
where bijectivity of $\delta^*$
and the condition that $\delta^*$ induces only one equivalence class
is required.
In the following, we show that these assumptions are true for $\delta^*$
by generating a sequence of bijective successor functions
$\delta^*_0,\ldots,\delta^*_N$ such that $\delta^*_0$ is bijective,
$\delta^*_N=\delta^*$
and $\delta_{i+1}^*$ emerges from $\delta_{i}^*$ by swapping
of edge successors.

Lemma~\ref{lem:everyEdgeWritten} (ii) induces an orientation on $E$ which we 
call $R^*$:
For all $\{u,v\}\in E$, we define
\[R^*(\{u,v\}):=(u,v) \text{ if }(u,v)\text{ has been written to the output 
stream}.\]

From now on, let $C_1,\ldots, C_N$ denote the cycles found in 
$E_{\mathrm{int}}$ 
by the algorithm
in chronological order.
For $k \in \{0,\ldots,N\}$ and a variable $x$, we denote by $x_k$ the value of 
$x$ after the $k$-th call of \textsc{Merge-Cycle}.
For $k = 0$, this means the initial value of $x$.

\begin{definition}
  For each $i\in[N]$, let $C_i=(v_1^{(i)},\ldots,v_{\ell_i}^{(i)})$
  be the form of the cycle given to \textsc{Merge-Cycle}.
  Define $\delta^c_i:E(C_i)\rightarrow E(C_i)$ by
  \mbox{$\delta^c_i(v_j^{(i)},v_{j+1}^{(i)}):=(v_{j+1}^{(i)},v_{j+2}^{(i)})$} 
for every $j\in[k_i]$
  and let $\delta^c:R^*(E)\rightarrow R^*(E)$ denote the unique successor 
function
  with $\delta^c|_{E(C_i)}=\delta^c_i$ for all $i\in[N]$.
\end{definition}
So $\delta^c$ is the natural successor function
induced by the cycles $C_1,\ldots,C_N$. 
\begin{lemma}\label{lem:delc}
  The successor function $\delta^c$ is bijective
  and for any two edges $e,e'$ we have
  $e\equiv_{\delta^c}e'\Leftrightarrow \exists i\in[N]: e,e'\in C_i$.
\end{lemma}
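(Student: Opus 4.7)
The plan is to split the argument into three steps: (a) establish that the edge sets $E(C_1),\ldots,E(C_N)$ form a partition of $R^*(E)$; (b) derive bijectivity of $\delta^c$ from this partition together with the cyclic nature of each $\delta^c_i$; and (c) prove the equivalence characterization.

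For (a), I would argue purely from how \textsc{Euler-Tour} manipulates $E_{\mathrm{int}}$. Every edge enters $E_{\mathrm{int}}$ exactly once, namely when it is read from the input stream. Whenever \textsc{Merge-Cycle}$(C_i)$ is invoked, the final loop of that procedure removes every edge of $C_i$ from $E_{\mathrm{int}}$. Hence an edge cannot appear in two distinct cycles $C_i, C_j$, since after its first appearance in some $C_i$ it is no longer in $E_{\mathrm{int}}$ and therefore cannot be part of a later cycle detected there. Combined with Lemma~\ref{lem:everyEdgeWritten}(i), which guarantees $E_{\mathrm{int}}=\emptyset$ at termination, this shows every oriented edge in $R^*(E)$ lies in exactly one $C_i$, so $\{E(C_i)\}_{i\in[N]}$ partitions $R^*(E)$.

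For (b), on each block $E(C_i)$ the map $\delta^c_i$ sends the $j$-th edge of $C_i$ to the $(j{+}1)$-th edge (cyclically), so it is a cyclic permutation of $E(C_i)$ and in particular a bijection from $E(C_i)$ onto itself. Because $\delta^c$ is defined as the unique extension whose restriction to each block is $\delta^c_i$, and the blocks partition $R^*(E)$ by step (a), $\delta^c$ is a disjoint union of bijections and hence bijective on $R^*(E)$. This also automatically gives, via Lemma~\ref{lem:eqrel}, that $\equiv_{\delta^c}$ is an equivalence relation, so we may freely speak of its classes.

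For (c), the backward direction is immediate: if $e,e'\in E(C_i)$, then because $\delta^c_i$ is a cyclic permutation of $E(C_i)$, iterating it from $e$ visits every edge of $C_i$, in particular $e'$, giving $e\equiv_{\delta^c} e'$. For the forward direction, suppose $e\equiv_{\delta^c}e'$ and let $i$ be the unique index with $e\in E(C_i)$ provided by (a). Since $\delta^c|_{E(C_i)}=\delta^c_i$ maps $E(C_i)$ into itself, a trivial induction shows that ${(\delta^c)}^k(e)\in E(C_i)$ for every $k\in\N$, so $e'\in E(C_i)$ as well. The main (mild) obstacle is really only step (a), i.e.\ arguing cleanly that distinct $C_i$'s are edge-disjoint and that together they exhaust $R^*(E)$; once that partitioning is in hand, bijectivity and the class description follow by tracking a single block.
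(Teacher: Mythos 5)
Your proof is correct and follows essentially the same route as the paper: bijectivity comes from the fact that $\delta^c$ acts as a cyclic permutation on each $E(C_i)$, and the equivalence characterization from the fact that $\delta^c$ maps each $E(C_i)$ into itself. The only difference is that you explicitly justify that the $E(C_i)$ partition $R^*(E)$ from the algorithm's handling of $E_{\mathrm{int}}$, which the paper leaves implicit in the well-definedness of $\delta^c$; that is a reasonable addition but not a different argument.
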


\begin{proof}
  We first show that $\delta^c$ is surjective:
  Let $e\in R^*(E)$. Then there exist $k\in[N]$ and $i\in\N$
  such that $C_k=(v_1,\ldots, v_{\ell_k})$ and $e=(v_i,v_{i+1})$.
  Then $\delta(v_{i-1},v_i)=(v_i,v_{i+1})=e$.
  Because $R^*(E)$ is finite, $\delta^c$ is bijective.\par
  Now let $e,e'\in R^*(E)$ with $e\eqc e'$.
  Let $i\in[N]$ such that $e\in C_i$.
  Since $\delta^c(C_i)=C_i$, it follows that $e'\in C_i$.
  \par
  Now let $e,e'\in C_i$ for some $i\in[N]$,
  for instance $C_i=(v_1,\ldots,v_{\ell_i})$
  and $j,k\in[\ell_i]$ with $e=(v_j,v_{j+1})$ and $e'=(v_k,v_{k+1})$.
  W.l.o.g.\ let $j<k$ and set $r:=k-j$.
  Then ${(\delta^c)}^r(e)=e'$, so $e\equiv_{\delta^c}e'$.
\end{proof}

Let now $k\in\{0,\ldots N\}$. We consider the point of time right
after the $k$-th iteration of \textsc{Merge-Cycle}
(for $k=0$ this means the very beginning of the algorithm).
We call edges from $\bigcup\limits_{i=1}^k E(C_i)$ \emph{processed edges},
since those edges have already been loaded into and then deleted from 
$E_{\mathrm{int}}$.
All processed edges can be divided into two types:
\begin{itemize}
\item Type A:\ The edge has been written on the stream with a dedicated 
successor.
\item Type B:\ The edge has been added to $F$.
\end{itemize}
These are the only possible cases for processed edges
because an edge is deleted from $E_{\mathrm{int}}$
is either written to the stream or added to $F$.
This leads to the following definition.
\begin{definition}
  For every $k\in\{0\ldots,N\}$ define the function
  $\delta_k:\bigcup\limits_{i=1}^k E(C_i)\rightarrow \bigcup\limits_{i=1}^k 
E(C_i)$ by 
  \[(u,v)\mapsto \begin{cases}
      e'      &\text{if } (u,v) \text{ is of type A with successor }e'\\
      (v,j_k(v))&\text{if } (u,v) \text{ is of type B}
    \end{cases}\]
  \[\text{ and define }\delta^*_k:=\begin{cases}
      \delta_k \text{ on }\bigcup\limits_{i=1}^{k}E(C_i)\\
      \delta^c \text{ on }\bigcup\limits_{i=k+1}^{N}E(C_i).
    \end{cases}\]
\end{definition}
Note that $\delta^*_0=\delta^c$ and $\delta^*_N=\delta^*$.

\begin{lemma}\label{obs:obs}
  Let $k,\ell\in\{0,\ldots,N\}$ with $k<\ell$.
  Then for any $v,v'\in V$, $e\in R^*(E)$, we have
  \begin{itemize}
  \item[(i)] If $t_k(v)=t_k(v')\neq 0$, then $t_{\ell}(v)=t_{\ell}(v')$.
  \item[(ii)] If $e\in C_{\ell}$, then ${[e]}_{\delk}={[e]}_{\delta^c}$.
  \end{itemize}
\end{lemma}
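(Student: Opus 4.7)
The plan is to reduce part (i) to a single-step statement and then iterate. Concretely, I would prove by induction on $\ell-k$ that if $t_k(v)=t_k(v')\neq 0$, then $t_{k+1}(v)=t_{k+1}(v')\neq 0$; the general statement then follows by chaining. Since \textsc{Update} is the only procedure that modifies $t$-values, the inductive step reduces to a case analysis of what \textsc{Update} does during the $(k{+}1)$-st call of \textsc{Merge-Cycle}. The structural observation to exploit is that $M_{k+1}$ is by construction precisely the set of non-zero $t$-values attained at nodes of the cycle $C_{k+1}$.

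Write $j:=t_k(v)=t_k(v')$ and consider two cases. If $j\in M_{k+1}$, then the first \textsc{for}-loop of \textsc{Update} simultaneously overwrites $t(v)$ and $t(v')$ with the same new value $a$; the subsequent cycle-node loop can only reassign cycle nodes to $a$ as well, so both $t$-values agree after the step. If $j\notin M_{k+1}$, the first loop touches neither $v$ nor $v'$; moreover, neither of them can lie on $C_{k+1}$, because otherwise $j$ would lie in $M_{k+1}$ by the characterisation of $M_{k+1}$, contradicting the assumption. Hence the cycle-node loop also leaves them unchanged, and $t_{k+1}(v)=j=t_{k+1}(v')$. In both cases the new common value is non-zero, since $a$ is either the freshly incremented counter $c\geq 1$ or $\min(M_{k+1})\geq 1$. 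The potentially delicate subcase where exactly one of $v,v'$ lies on the cycle is thus ruled out by the $M$-characterisation, which I expect to be the main obstacle.

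For part (ii), let $e\in C_\ell$ with $\ell>k$. By the definition of $\delta^*_k$, it coincides with $\delta^c$ on every $C_i$ with $i>k$, and in particular on $E(C_\ell)$. Since $\delta^c$ restricts to the cyclic permutation of $E(C_\ell)$, an easy induction on $j$ shows that $(\delta^*_k)^j(e)=(\delta^c)^j(e)\in E(C_\ell)$ for all $j\geq 1$, so the forward orbits of $e$ under the two functions coincide. Consequently
\[{[e]}_{\delta^*_k}=\{(\delta^*_k)^j(e):j\geq 1\}=\{(\delta^c)^j(e):j\geq 1\}={[e]}_{\delta^c},\]
where the last equality is Lemma~\ref{lem:delc} together with the fact that $\delta^c$ restricts to a bijection on $E(C_\ell)$. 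This completes the proof.
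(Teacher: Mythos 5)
Your proposal is correct and follows essentially the same route as the paper: both parts hinge on reducing (i) to a single call of \textsc{Update} and on the observation that $M_{k+1}$ collects exactly the non-zero $t$-values of the cycle nodes, which rules out the case where one of $v,v'$ lies on $C_{k+1}$ and the other does not; part (ii) is the same orbit-coincidence induction as in the paper. The only cosmetic difference is that you argue (i) directly by case analysis on whether $t_k(v)\in M_{k+1}$, whereas the paper phrases it as a proof by contradiction at the first index where the $t$-values diverge.
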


\begin{proof}
  About (i). Let $v,v'\in V$ with $t_k(v)=t_k(v')\neq 0$.
  Assume for a moment that $t_{\ell}(v)\neq t_{\ell}(v')$.
  Then there exists $k\leq k'< \ell$ such that
  $t_{k'}(v)=t_{k'}(v')$ and $t_{k'+1}(v)\neq t_{k'+1}(v')$.
  Furthermore, $t_{k'}(v)\neq 0$, because $t_k(v)\neq 0$
  and the value $t(v)$ is never set to $0$ after its initiation.
  We take a closer look at the $(k'+1)$-th call of \textsc{Merge-Cycle}.
  If for a node its $t$-value is changed in this call,
  it is set to $a_{k'+1}$ (line $9$ or $11$ in \textsc{Update}),
  so we may assume that $t_{k'+1}(v)=a_{k'+1}\neq t_{k'+1}(v')$.
  But this implies that $t_{k'}(v)\in M$ or $v\in C_{k'+1}$,
  in which case also $t_{k'}(v)\in M$ (since $t_{k'}(v)\neq 0$). 
  But then, $t_{k'}(v')=t_{k'}(v)\in M$ and therefore 
$t_{k'+1}(v')=a_{k'+1}=t_{k'+1}(v)$,
  in contradiction to our assumption.\par
  About (ii). Let $e\in C_\ell$.
  With Lemma~\ref{lem:delc}, we get ${[e]}_{\delk}=E(C_\ell)$.
  Since $\ell>k$, $\delk (e')=\delta^c (e')$ for any $e'\in C_\ell$.
  Hence, $\delk (e)=\delta^c (e)$
  and using $\delta^c(e)\in C_\ell$,
  by induction ${(\delk)}^j(e)={(\delta^c)}^j(e)$ for any $j>1$,
  which proves the claim.
\end{proof}

The following two lemmata form the technical foundation of our analysis.
In Lemma~\ref{technical2} we repeat in a formal way the basic idea of
tour-merging given in Section~\ref{sec:idea}.
It is needed for the proof of Lemma~\ref{mainLemma}.

\begin{lemma}\label{technical2}
  Let $\vG=(V,\vE)$ be a directed graph with bijective successor function 
$\delta$
  and the related equivalence relation $\eqd$.
  Let $r\in\N$ and $e_1,\ldots,e_r\in\vE$
  with $e_i\eqd e_j$ for every $i,j\in[r]$.
  Let $e_1',\ldots,e_r'\in\vE$
  with $e_i'\neqd e_j'$ and $e_i\neqd e_i'$ for every $i,j\in[r]$.
  Let $\delta'$ be a successor function on $\vG$ with
  $\delta'(e)=\delta(e)$ for every
  $e\in \vE\setminus\{e_1,\ldots,e_r,e_1',\ldots,e_r'\}$
  and $\delta'(e_i)=\delta(e_i')$ and $\delta'(e_i')=\delta(e_i)$
  for any $i\in[r]$.
  Then, $\delta'$ is bijective and
  \begin{align}    
&{[e_1]}_{\delta'}=\bigcup\limits_{i=1}^{r}{[e_i']}_\delta\cup{[e_1]}_{\delta}
\tag{P1}\label{prop1}\\
    &{[e]}_{\delta'}={[e]}_\delta \text{ for any } 
e\in\vE\setminus{[e_1]}_{\delta'}.\tag{P2}\label{prop2}
  \end{align}
\end{lemma}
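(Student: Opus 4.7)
My approach is to handle bijectivity by a quick counting argument on the $2r$ modified edges, to establish (P1) by explicitly tracing the $\delta'$-orbit starting at $e_1$, and then to deduce (P2) from the fact that $\delta'$ coincides with $\delta$ outside this orbit. For bijectivity, observe that $\delta'$ and $\delta$ differ only on the finite set $S:=\{e_1,\ldots,e_r,e_1',\ldots,e_r'\}$, and on $S$ the map $\delta'$ merely permutes the values $\delta$ already took. The hypotheses $e_i\neqd e_i'$ and $e_i'\neqd e_j'$ for $i\neq j$, combined with $e_i\eqd e_j$, force the classes ${[e_1]}_\delta,{[e_1']}_\delta,\ldots,{[e_r']}_\delta$ to be mutually disjoint, so the $2r$ edges of $S$ are distinct; hence $\delta$ injective makes the $2r$ images $\delta(e_i),\delta(e_i')$ distinct, so $\delta'|_S$ stays injective with the same image set and $\delta'=\delta$ off $S$. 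Finiteness of $\vE$ then gives bijectivity.

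For (P1), I would reindex so that $e_{i_1}:=e_1,e_{i_2},\ldots,e_{i_r}$ are the $e_i$'s listed in the order they appear along the $\delta$-orbit of $e_1$, and follow $\delta'$ from $e_1$. Each step $\delta'(e_{i_j})=\delta(e_{i_j}')$ drops the trajectory into ${[e_{i_j}']}_\delta$; within this class the only swap edge present is $e_{i_j}'$ itself, since no other $e_{i_l}'$ lies there by $e_i'\neqd e_j'$ and no $e_{i_l}$ lies there either because the $e_i$'s all sit in ${[e_1]}_\delta$, which is disjoint from ${[e_{i_j}']}_\delta$. Therefore $\delta'=\delta$ on ${[e_{i_j}']}_\delta\setminus\{e_{i_j}'\}$, and Lemma~\ref{lem:technical1}(ii) guarantees that iterating $\delta'$ from $\delta(e_{i_j}')$ traces out all of ${[e_{i_j}']}_\delta$ before returning to $e_{i_j}'$. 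Then $\delta'(e_{i_j}')=\delta(e_{i_j})$ throws the trajectory back into ${[e_1]}_\delta$, on which $\delta'$ acts as $\delta$ except at the remaining $e_{i_l}$'s, so we follow it to the next $e_{i_{j+1}}$ (or close back at $e_1$ when $j=r$). Repeating this $r$ times produces a closed $\delta'$-orbit visiting exactly ${[e_1]}_\delta\cup\bigcup_{j=1}^{r}{[e_{i_j}']}_\delta$, which is (P1).

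For (P2), if $e\notin{[e_1]}_{\delta'}$ then ${[e]}_\delta$ cannot intersect ${[e_1]}_\delta$ or any ${[e_i']}_\delta$: any such intersection of $\delta$-equivalence classes would force equality and hence ${[e]}_\delta\subseteq{[e_1]}_{\delta'}$ by (P1), contradicting $e\notin{[e_1]}_{\delta'}$. So ${[e]}_\delta$ contains none of the swap edges, $\delta'$ and $\delta$ agree on ${[e]}_\delta$, and the $\delta'$-orbit of $e$ coincides with its $\delta$-orbit, giving ${[e]}_{\delta'}={[e]}_\delta$. The main obstacle is the (P1) trace: one must carefully check that each detour class ${[e_{i_j}']}_\delta$ is entered only at $\delta(e_{i_j}')$ and left only at $e_{i_j}'$, which is exactly what the pairwise $\delta$-inequivalence hypotheses $e_i\neqd e_i'$ and $e_i'\neqd e_j'$ buy us, together with Lemma~\ref{lem:technical1}(ii) to ensure each detour is a full traversal rather than a proper sub-orbit.
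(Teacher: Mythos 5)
Your proof is correct, but it takes a genuinely different route from the paper. The paper proves the lemma by induction over $r$: the base case $r=1$ is handled by a double inclusion (showing ${[e_1]}_{\delta'}\subseteq{[e_1]}_\delta\cup{[e_1']}_\delta$ via invariance of that union under $\delta'$, and the reverse inclusion by computing $\delta^k(e_1')=(\delta')^k(e_1)$), and the induction step introduces an intermediate successor function $\delta_r$ that performs only the first $r$ swaps, applying the hypothesis twice. You instead trace the $\delta'$-orbit of $e_1$ directly: order the $e_i$ along the $\delta$-cycle of ${[e_1]}_\delta$, show each swap edge $e_{i_j}$ launches a detour that traverses all of ${[e_{i_j}']}_\delta$ exactly once (using Lemma~\ref{lem:technical1} and the disjointness of the classes ${[e_1]}_\delta,{[e_1']}_\delta,\ldots,{[e_r']}_\delta$ guaranteed by the inequivalence hypotheses) before returning to ${[e_1]}_\delta$, and conclude that the orbit closes after visiting exactly ${[e_1]}_\delta\cup\bigcup_{i=1}^r{[e_i']}_\delta$. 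Your argument is more direct and makes the combinatorial picture of the merged tour explicit, at the cost of some bookkeeping about where each detour is entered and left; the paper's induction is more modular and reusable but obscures the geometry. Your deductions of bijectivity and of (P2) from (P1) match the paper's in substance. One small imprecision: the mutual disjointness of the classes does not by itself make $e_1,\ldots,e_r$ pairwise distinct (they all lie in ${[e_1]}_\delta$); their distinctness instead follows from the well-definedness of $\delta'$, since $\delta'(e_i)=\delta(e_i')$ and the values $\delta(e_i')$ are pairwise distinct. This is cosmetic and does not affect the argument.
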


\begin{proof}
  Via induction over $r$.
  First of all notice that because of the definition of $\delta'$
  and because $\delta$ is bijective, $\delta'$ is bijective as well.
  For $r=1$ to shorten notation, we write $e$ and $e'$ instead of $e_1$ and 
$e_1'$.
  We first show
  \begin{equation}\label{techeq1}
    {[e]}_{\delta'}\subseteq {[e]}_\delta\cup{[e']}_\delta:
  \end{equation}
  First we show that for any $e''\in {[e]}_\delta\cup{[e']}_\delta$,
  we have $\delta'(e'')\in {[e]}_\delta\cup{[e']}_\delta$:
  Let $e''\in{[e]}_\delta\cup{[e']}_\delta$.
  Then there exists $k\in\N$
  such that $e''=\delta^k(e)$ or $e''=\delta^k(e')$.
  If $e''\in\{e,e'\}$, then $\delta'(e'')\in\{\delta(e),\delta(e')\}$
  and otherwise $\delta'(e'')=\delta(e'')=\delta^{k+1}(e)$
  or $\delta'(e'')=\delta^{k+1}(e')$, respectively.
  So in each case 
  we have $\delta'(e'')\in {[e]}_\delta\cup{[e']}_{\delta}$.
  Since $e\in{[e]}_\delta\cup{[e']}_{\delta}$,
  it follows by induction on $n$ that 
${(\delta')}^n(e)\in{[e]}_\delta\cup{[e']}_{\delta}$
  for any $n\in\N$, so ${[e]}_{\delta'} \subseteq {[e]}_{\delta} \cup 
{[e']}_{\delta}$. \par
  Next, we show
  \begin{equation}\label{techeq2}
    {[e]}_\delta\cup{[e']}_\delta\subseteq{[e']}_{\delta'}.
  \end{equation}
  
  Let $e''\in{[e']}_\delta$.
  Then there exists $k\in\{1,\ldots,|{[e']}_\delta|\}$
  with $e''=\delta^k(e')$.
  Since $e\notin{[e']}_\delta$
  and $\delta^\ell(e')\neq e'$ for all $\ell\in\{1,\ldots,k-1\}$
  (follows from Lemma~\ref{lem:technical1} (i)),
  we have
  \[\delta^{k}(e')=\delta(\delta^{k-1}(e'))=\delta'(\delta^{k-1}(e'))
    =\delta'(\delta'(\delta^{k-2}(e')))=\cdots ={(\delta')}^{k-1}(\delta(e')).\]
  Hence 
$e''=\delta^k(e')={(\delta')}^{k-1}(\delta(e'))={(\delta')}^{k-1}(\delta'(e))={
(\delta')}^k(e)\in{[e]}_{\delta'}$.
  So we have
  \begin{equation}\label{techeq3}
    {[e']}_\delta\subseteq{[e]}_{\delta'}
  \end{equation}
  
  and analogously we get
  \begin{equation}\label{techeq4}
    {[e]}_\delta\subseteq{[e']}_{\delta'},
  \end{equation}
  Because $\delta(e')\in{[e']}_{\delta}\subseteq{[e]}_{\delta'}$,
  we have ${[\delta(e')]}_{\delta'}={[e]}_{\delta'}$
  and it follows that
  \begin{equation}\label{techeq5}    
{[e]}_{\delta'}={[\delta(e')]}_{\delta'}={[\delta'(e)]}_{\delta'}={[e']}_{
\delta'}.
  \end{equation}
  Combining~\eqref{techeq3},~\eqref{techeq4}, and~\eqref{techeq5},
  we proved~\eqref{techeq2}.
  With~\eqref{techeq1},~\eqref{techeq2}, and~\eqref{techeq5}, we have
  
\[{[e]}_{\delta'}\subseteq{[e]}_\delta\cup{[e']}_\delta\subseteq{[e']}_{\delta'}
={[e]}_{\delta'},\]
  so property~\eqref{prop1} is proven.
  For~\eqref{prop2}, note that
  $\delta^k(e'')={(\delta')}^k(e'')$ for any $e''$ with $e''\neqd e, e''\neqd 
e'$
  and any $k\in\N$.
  \par
  \emph{Induction step:}
  Now let $r\in \N$ and let the claim be true for all $k\leq r\in\N$.
  Let $e_1,\ldots,e_{r+1}\in \vE$ with $e_i\eqd e_j$ for every $i,j\in[r+1]$.
  Let $e_1',\ldots, e_{r+1}'\in\vE$ with $e_i'\neqd e_j'$ and $e_i'\neqd e_i$
  for every $i\neq j\in[r+1]$. Let $\delta'$ be a successor function on $\vG$
  with $\delta'(e)=\delta(e)$ for every 
$e\in\vE\setminus\{e_1,\ldots,e_{r+1},e_1',\ldots e_{r+1}'\}$
  and $\delta'(e_i)=\delta(e_i')$ and $\delta'(e_i')=\delta(e_i)$ for every 
$i\in[r+1]$.
  We define a successor function $\delta_r$ for $\vG$ by
  \[\delta_r:=\begin{cases}
      \delta' \text{ on } \vE\setminus\{e_{r+1},e'_{r+1}\}\\
      \delta \text{ on } \{e_{r+1},e'_{r+1}\}.
    \end{cases}\]
  
  With the induction hypothesis applied to $\delta$ and $\delta_r$, we get 
by~\eqref{prop1}
  \begin{equation}\label{deldelm1}    
{[e_1]}_{\delta_r}=\bigcup\limits_{i=1}^{r}{[e_i']}_{\delta}\cup{[e_1]}_\delta
  \end{equation}
  
  and by~\eqref{prop2}
  \begin{equation}\label{deldelmp2}
    {[e_{r+1}']}_{\delta_r}={[e_{r+1}']}_{\delta}.
  \end{equation}
  Now we apply the induction hypothesis to $\delta_r$ and $\delta'$ as follows:
  We take $\delta_r$ instead of $\delta$, $\delta'$ remains, $r=1$,
  $e_1$ resp. $e_1'$ are replaced by $e_{r+1}$ resp. $e_{r+1}'$.
  This gives
  \begin{equation}\label{deldelm2}
    {[e_{r+1}]}_{\delta'}={[e_{r+1}']}_{\delta_r}\cup{[e_{r+1}]}_{\delta_r}.
  \end{equation}
  Since $e_1\eqd e_{r+1}$, we get with~\eqref{deldelm1}
  \[e_{r+1}\in{[e_{r+1}]}_{\delta}={[e_1]}_\delta\subseteq{[e_1]}_{\delta_r}\]
  which implies 
  \begin{equation}\label{deldelm2a}
    {[e_{r+1}]}_{\delta_r}={[e_1]}_{\delta_r}.
  \end{equation}
  Summarizing, we have 
  \begin{align}
    {[e_{r+1}]}_{\delta'}  
&\stackrel{\eqref{deldelm2}}{=}{[e_{r+1}']}_{\delta_r}\cup{[e_{r+1}]}_{\delta_r}
\notag \\   
&\stackrel{\eqref{deldelm2a}}{=}{[e_{r+1}']}_{\delta_r}\cup{[e_1]}_{\delta_m}
\label{eq1}\\
    &\stackrel{\eqref{deldelm1}}{=}{[e_{r+1}']}_{\delta_r}\cup\Big( 
\bigcup\limits_{i=1}^{r}{[e_i']}_{\delta}\cup{[e_1]}_\delta\Big) \notag\\
    &\stackrel{\eqref{deldelmp2}}{=}{[e_{r+1}']}_{\delta}\cup\Big( 
\bigcup\limits_{i=1}^{r}{[e_i']}_{\delta}\cup{[e_1]}_\delta\Big) \notag\\
    &=\bigcup\limits_{i=1}^{r+1}{[e_i']}_{\delta}\cup{[e_1]}_\delta.\label{eq2}
  \end{align}
  So~\eqref{prop1} is proved, if ${[e_{r+1}]}_{\delta'}={[e_1]}_{\delta'}$.
  By~\eqref{eq2} ${[e_1]}_\delta\subseteq{[e_{r+1}]}_{\delta'}$,
  so $e_1\in{[e_{r+1}]}_{\delta'}$ and hence 
  \begin{equation}\label{deldelm2b}
    {[e_{r+1}]}_{\delta'}={[e_1]}_{\delta'}. 
  \end{equation}
  \par
  
  For the proof of~\eqref{prop2}, let $e\in\vE\setminus{[e_1]}_{\delta'}$.
  Since $e\notin{[e_1]}_{\delta'}$, by~\eqref{eq1} and~\eqref{deldelm2b} 
$e\notin{[e_1]}_{\delta_r}$.
  Applying~\eqref{prop2} of the induction hypothesis to $\delta$ and $\delta_r$,
  gives us ${[e]}_{\delta_r}={[e]}_{\delta}$.
  We know ${[e_{r+1}]}_{\delta'}={[e_1]}_{\delta'}$,
  so $e\notin{[e_{r+1}]}_{\delta'}$.
  As above, we apply the induction hypothesis to $\delta_r$ and $\delta'$
  and get ${[e]}_{\delta'}={[e]}_{\delta_r}$.
  Altogether ${[e]}_{\delta'}={[e]}_{\delta_r}=[e_\delta]$.
\end{proof}

Note that $\delta'$ emerges from $\delta$ by swapping of successors as 
explained 
in
the beginning of Section~\ref{sec:idea}.
The restriction $e_i'\neqd e_j'$ reflects the fact
that we have to choose exactly one common node per tour for merging,
as already explained in Section~\ref{sec:idea}, see Figure~\ref{fig:WrongSwap}. 

\begin{lemma}\label{mainLemma}
  Let $k\in\{0,\ldots,N\}$.
  Then, $\delk$ is bijective and for any $(u,v),(u',v') \linebreak \in R^*(E)$, 
we have
  \begin{itemize}
  \item[(i)] If $(u,v),(u',v')$ are processed edges,
    then $(u,v)\eqk(u',v')\Leftrightarrow t_k(u)=t_k(u')$.
  \item[(ii)] If $(u,v)$ is a processed edge, then $t_k(u)=t_k(v)$. 
  \item[(iii)] If $t_k(u)=0$, then $(u,v)\eqk(u',v')\Leftrightarrow 
(u,v)\equiv_{\delta^c}(u',v')$.
  \end{itemize}
\end{lemma}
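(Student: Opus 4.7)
The plan is to prove Lemma~\ref{mainLemma} by induction on $k$. The base case $k=0$ is immediate: $\delta_0^{*} = \delta^c$ is bijective by Lemma~\ref{lem:delc}, no edges are yet processed so (i) and (ii) are vacuous, and every $t_0(v) = 0$, so (iii) reduces to a tautology.

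For the inductive step I would identify the transition from $\delk$ to $\delkp$ as an instance of Lemma~\ref{technical2}. Let $J = \{v^{(1)}, \ldots, v^{(r)}\}$ be the merge nodes selected by \textsc{Construct-J-M} on $C_{k+1}$; for each $i$, set $e_i$ to be the in-going edge of $v^{(i)}$ inside $C_{k+1}$, and $e_i'$ to be the first-in edge of $v^{(i)}$ already stored in $F$ (which exists because $t_k(v^{(i)}) \in M$ is nonzero, so $v^{(i)}$ was visited in some earlier cycle). Tracing through \textsc{New-Nodes}, \textsc{Merge}, \textsc{Write} and \textsc{Update} shows that $\delkp(e_i) = (v^{(i)}, j_k(v^{(i)})) = \delk(e_i')$ and $\delkp(e_i') = \delk(e_i)$, while $\delkp = \delk$ on all other edges. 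The hypotheses of Lemma~\ref{technical2} are then verified as follows: $e_i \eqk e_j$ by combining Lemma~\ref{obs:obs}(ii) with Lemma~\ref{lem:delc}; $e_i' \neqk e_j'$ for $i \neq j$ by induction hypotheses (i) and (ii), since the labels $t_k(v^{(i)}) = m_i$ are pairwise distinct by construction of $J$; and $e_i \neqk e_i'$ because $[e_i]_{\delk} = E(C_{k+1})$ contains only unprocessed edges whereas $e_i'$ is processed. Lemma~\ref{technical2} then yields bijectivity of $\delkp$ together with the class identities (P1) and (P2).

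With this structural step in hand, properties (i)--(iii) for $k+1$ reduce to case analysis against the three mutations of $t$ performed by \textsc{Update}. For (i), every processed edge whose $t_k$-value lies in $M$ or which sits in $E(C_{k+1})$ is absorbed into $[e_1]_{\delkp}$ by (P1), and its endpoints are simultaneously relabeled to the common value $a$ by \textsc{Update}; edges untouched by the merge keep both their class, via (P2), and their $t$-value, so new equivalences match the new $t$-partition. Property (ii) is inherited from the induction hypothesis for edges processed before iteration $k+1$ (both endpoints either keep their $t$-value or are both remapped to $a$), while for edges in $C_{k+1}$ both endpoints are overwritten to $a$ in the final loop of \textsc{Update}. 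For (iii), $t_{k+1}(u) = 0$ forces $u$ to lie only on cycles $C_j$ with $j \geq k+2$, so $\delkp$ agrees with $\delta^c$ along the entire orbit of $(u,v)$, whence $[(u,v)]_{\delkp} = [(u,v)]_{\delta^c}$.

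The main obstacle I expect is the bookkeeping required to establish the swap characterization precisely: one must classify every edge of $C_{k+1}$ as (a) a new first-in edge added to $F$ by \textsc{New-Nodes}, (b) the in-going edge of a merge node written by \textsc{Merge}, or (c) an edge written by \textsc{Write} with its natural cycle successor, and check in each case that, outside the designated pairs $\{e_i,e_i'\}$, the successor value produced by $\delkp$ coincides with $\delk$. Once this identification is secured, Lemma~\ref{technical2} does the structural work and the remaining translations between $t$-values and equivalence classes are routine.
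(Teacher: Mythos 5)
Your proposal follows essentially the same route as the paper's proof: induction on $k$ with the base case from Lemma~\ref{lem:delc}, the inductive step realized by exhibiting the \textsc{Merge} step as the successor swap of Lemma~\ref{technical2} between the in-going cycle edges and the stored first-in edges at the chosen nodes, the three hypotheses of that lemma verified in the same way (cycle edges equivalent via $\delta^c$, first-in edges pairwise inequivalent via induction hypotheses (i) and (ii) and the distinctness of the $t$-labels in $J$), and (i)--(iii) then read off from (P1)/(P2) against the relabeling in \textsc{Update}. The only cosmetic difference is that your naming of $e_i$ versus $e_i'$ is interchanged relative to the paper's (and in fact matches the letter convention of Lemma~\ref{technical2} more directly); the argument is the same.
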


Claim (i) says that the procedure \textsc{Update} works correctly,
i.e., that the $t$-value of a node (if it isn't $0$)
always represents the tour it currently
is associated to.
Claim (ii) says that after an edge has been processed,
both of their nodes are associated to the same tour.
So after the algorithm has finished,
every node of $G$ is in the same tour as its neighbor.

\begin{proof}
  We prove all claims via one induction over $k$.
  For $k=0$ we have $\delta^*_0=\delta^c$
  which is bijective (Lemma~\ref{lem:delc}).
  Moreover, no edge has been processed so far,
  so (i) and (ii) are trivially fulfilled
  and (iii) follows directly from $\delta^*_0=\delta^c$.
  \par
  Now let all of the claims be true for $k\in \{0,\ldots,N-1\}$.
  We start with proving the bijectivity and (i) for $k+1$.\par
  For this we take a closer look at the $(k+1)$-th call of \textsc{Merge-Cycle}.
  
  If $\delk\neq \delkp$,
  this change has to be happening in one of the procedures
  \textsc{New-Nodes}, \textsc{Merge} or \textsc{Write},
  since these are the only procedures in which
  edges are written to the stream or added to $F$.
  First, note that for every edge $e$
  written to the stream during \textsc{Write}
  or added to $F$ in \textsc{New-Nodes} it holds
  $\delk(e)=\delkp(e)$:\par
  If $e=(v_i,v_{i+1})$ is written to the stream during \textsc{Write},
  it is written in the form $(v_i,v_{i+1},v_{i+2})$,
  so we have
  $\delkp(e)=(v_{i+1},v_{i+2})=\delta^c(e)=\delk(e)$.\par
  If $e=(v_{i-1},v_i)$ is added to $F$ during \textsc{New-Nodes},
  it becomes a type-B-edge at this point,
  so $\delkp(e)=(v_i,j(v_i))$.
  Moreover, $j(v_i)$ is set to $v_{i+1}$ in line $3$,
  so $\delkp(e)=(v_i,v_{i+1})=\delta^c(e)=\delk(e)$.\par
  
  So we may concentrate on the procedure \textsc{Merge}:
  Here we process every node from the set $J_{k+1}$.
  Let $r:=|J_{k+1}|$, for instance $J=\{w_1,\ldots,w_r\}$.
  Each of these nodes $w_i$ has been processed before,
  hence, there is a unique edge in $F_k$ that ends in $w_i$
  and which we denote by $e_i$.
  Moreover, there is a unique edge in $C_{k+1}$ that ends in $w_i$
  and which we denote by $e_i'$.
  Now let $i\in[r]$.
  We process $w_i$ in two steps:\par
  Step 1: $(w_i,j(w_i))$ is marked as successor of $e_i'$.
  So directly after this step, $e_i'$ and $e_i$ share the same successor,
  while the out-going edge of $w_i$ in $C_{k+1}$ has lost its predecessor.\par
  Step 2: $j(w_i)$ is set to the next node in the cycle,
  so that the out-going edge of $w_i$  becomes the successor of $e_i$.
  \par
  In these two steps we swapped the successors of $e_i$ and $e_i'$
  and did not change anything else, so what we get is
  \[\delkp(e)=\delk(e)\text{ for any } e\in \vE\setminus\{e_1,\ldots, 
e_r,e_1',\ldots,e_r'\}\]
  and for any $i\in[r]$ 
  \[\delkp(e_i)=\delk(e_i')\text{ and }\delkp(e_i')=\delk(e_i).\]
  Let $i,j\in[r]$ with $i\neq j$. We have
  $e_i'\eqk e_j'$, because $e_j'\in 
C_{k+1}={[e_i']}_{\delta^c}={[e_i']}_{\delk}$.
  We also have $e_i\neqk e_j$, which follows from $t_k(w_i)\neq t_k(w_j)$ 
(\textsc{Construct-J-M}, line $4$)
  together with the induction hypothesis.
  Finally we have $e_i\neqk e_i'$, because $e_i'\notin 
E(C_{k+1})={[e_i]}_{\delta^c}={[e_i]}_{\delk}$.

  So we can apply Lemma~\ref{technical2} with $\delta=\delk$ and 
$\delta'=\delkp$
  and get the bijectivity of $\delkp$ and for every processed edge $e$
  \begin{align*}
    e\in{[e_1]}_{\delkp}
    \Leftrightarrow 
e\in\bigcup\limits_{i=1}^{r}{[e_i']}_{\delk}\cup{[e_1]}_{\delk}
    \Leftrightarrow t_k(e_{(1)})\in M_k \lor e\in C_{k+1}
    \Leftrightarrow t_{k+1}(e_{(1)})= a_{k+1} 
  \end{align*}
  and
  \begin{align*}
    e\notin{[e_1]}_{\delkp}
    &\Leftrightarrow 
e\notin\bigcup\limits_{i=1}^{r}{[e_i']}_{\delk}\cup{[e_1]}_{\delk}
      \Leftrightarrow t_k(e_{(1)})\notin M_k \land e\notin C_{k+1}\\
    &\Leftrightarrow t_{k+1}(e_{(1)})=t_k(e_{(1)})\neq a_{k+1}.
  \end{align*}
  Now we are able to complete the proof of (i):
  Let $(u,v),(u',v')$ be processed edges.\par
  Case 1: $(u,v),(u',v')\in {[e_1]}_{\delkp}$.
  Then $(u,v)\eqkp (u',v')$ and $t(u)=a_{k+1}=t(u')$.\par
  Case 2: $(u,v) \in {[e_1]}_{\delkp},(u',v')\notin {[e_1]}_{\delkp}$.
  Then $(u,v)\neqkp (u',v')$ and $t(u)=a_{k+1}\neq t(u')$.\par
  Case 3: $(u,v) \notin {[e_1]}_{\delkp},(u',v')\in {[e_1]}_{\delkp}$.
  Analog to case 2.\par
  Case 4: $(u,v),(u',v')\notin {[e_1]}_{\delkp}$.
  Then $t_{k+1}(u)=t_k(u),t_{k+1}(u')=t_k(u')$ and (\ref{prop2} of 
Lemma~\ref{technical2})
  ${[(u,v)]}_{\delkp}={[(u,v)]}_{\delk}$ and 
${[(u',v')]}_{\delkp}={[(u',v')]}_{\delk}$.
  So we have 
  \[(u,v)\eqkp(u',v')\Leftrightarrow (u,v)\eqk(u',v')\Leftrightarrow 
t_k(u)=t_k(u')\Leftrightarrow t_{k+1}(u)=t_{k+1}(u').\]

  \par About (ii). Let $(u,v)$ be a processed edge.
  If $(u,v)\in C_{k+1}$,
  then at the end of \textsc{Merge-Cycle} both $t(u)$ and $t(v)$
  are set to the same value $a$.
  If $(u,v)\notin C_{k+1}$, then $(u,v)$ already was a processed edge before
  so by induction hypothesis and Lemma~\ref{obs:obs} we are finished.
  \par About (iii). Let $u\in V$ with $t_{k+1}(u)=0$.
  That means that $u$ is not processed in the first $k+1$ calls of 
\textsc{Merge-Cycle}.
  Especially we have
  $(u,v)\equiv_{\delkp}(u',v')\Leftrightarrow (u,v)\eqk(u',v')\Leftrightarrow 
(u,v)\equiv_{\delta^c}(u',v')$
  by induction hypothesis.  
\end{proof}

These results suffice to proof our main result,
given in the following.

\begin{theorem}\label{finalthm}
  If $G$ is Eulerian, $\delta^*$ determines an Euler tour on $G$.
\end{theorem}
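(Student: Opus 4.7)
The plan is to reduce the theorem to Theorem~\ref{thm:classifyEulerTour}, applied to the directed graph $\vec{G}^*:=(V,R^*(E))$ with successor function $\delta^*=\delta^*_N$. To invoke that theorem, I need to verify (a) that $\delta^*$ is a bijective successor function on $\vec{G}^*$, and (b) that all edges of $R^*(E)$ lie in a single equivalence class of $\equiv_{\delta^*}$.

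For (a), note that since $G$ is Eulerian, Lemma~\ref{lem:everyEdgeWritten}(ii) guarantees that every edge $\{u,v\}\in E$ has been written exactly once, so $R^*$ is a well-defined orientation and $R^*(E)=\bigcup_{i=1}^N E(C_i)$, i.e., every edge has been processed by the end. Bijectivity of $\delta^*$ is then the $k=N$ case of Lemma~\ref{mainLemma}.

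For (b), I will show that after termination all endpoints of processed edges carry the \emph{same} nonzero $t$-value, which by Lemma~\ref{mainLemma}(i) forces all edges of $R^*(E)$ to be $\equiv_{\delta^*}$-equivalent. First, Lemma~\ref{mainLemma}(ii) applied at the moment an edge $(u,v)$ is processed gives $t(u)=t(v)\neq 0$ at that instant, and Lemma~\ref{obs:obs}(i) then propagates this equality up to time $N$. Since $G$ is Eulerian, hence connected on the vertices incident to edges, chaining the equality along a path between any two vertices yields $t_N(u)=t_N(u')$ for all such $u,u'$. With every edge processed, Lemma~\ref{mainLemma}(i) now gives $(u,v)\equiv_{\delta^*}(u',v')$ for all processed edges, i.e., on all of $R^*(E)$. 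Finally, Theorem~\ref{thm:classifyEulerTour} provides an Euler tour on $\vec{G}^*$, and undirecting yields an Euler tour on $G$.

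The main technical care concerns the propagation step: Lemma~\ref{mainLemma}(ii) only asserts equality of $t$-values at the time of processing, not at time $N$, so one must invoke Lemma~\ref{obs:obs}(i) to lift it. One also needs to observe that the ``not connected'' error branch in \textsc{Euler-Tour} is consistent with this argument (by our hypothesis it is not triggered), and implicitly that $G$ has no isolated vertices (irrelevant for the Euler tour). Once these points are handled, the proof is essentially a direct assembly of Lemmas~\ref{lem:everyEdgeWritten}, \ref{obs:obs}, and \ref{mainLemma} into the framework of Theorem~\ref{thm:classifyEulerTour}.
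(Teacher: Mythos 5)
Your proposal is correct and follows essentially the same route as the paper: reduce to Theorem~\ref{thm:classifyEulerTour}, get bijectivity from Lemma~\ref{mainLemma} at $k=N$, and obtain a single equivalence class by chaining $t_N$-values along a path via Lemma~\ref{mainLemma}(ii) and concluding with Lemma~\ref{mainLemma}(i). The only (harmless) difference is that you route the propagation of $t$-value equality through Lemma~\ref{obs:obs}(i), whereas the paper applies Lemma~\ref{mainLemma}(ii) directly at $k=N$, where every edge is already processed.
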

\begin{proof}
  According to Theorem~\ref{thm:classifyEulerTour}, it suffices to show
  that $\delta^*$ is bijective
  and that $e\equiv_{\delta^*}e'$ for any $e,e'\in R^*(E)$.
  Remember that $\delta^*=\delta^*_N$,
  so by  Lemma~\ref{mainLemma} $\delta^*$ is bijective.
  For the second property, let $e,e'\in R^*(E)$
  with $e=(u,v)$ and $e'=(u',v')$.
  If $G$ is Eulerian, it is connected,
  so there exists a $u$-$u'$-path $P$ in $G$.
  For every edge on $P$, either the edge itself
  or the corresponding reversed edge has been processed
  during the algorithm \textsc{ Euler-Tour}.
  By Lemma~\ref{mainLemma} (ii), $t_N(x)=t_N(y)$ for all nodes $x,y$ of $P$,
  hence, $t_N(u)=t_N(u')$
  and by Lemma~\ref{mainLemma} (i), we get $e\equiv_{\delta^*_N}e'$.
  Since $\delta^*_N=\delta^*$, we are done.
\end{proof}

\bibliographystyle{plain}
\bibliography{literature-ets}

\begin{thebibliography}{1}

\bibitem{Aggarwal}
Gagan Aggarwal, Mayur Datar, Sridhar Rajagopalan, and Matthias Ruhl.
\newblock On the streaming model augmented with a sorting primitive.
\newblock In {\em Proceedings of the 45th Annual IEEE Symposium on Foundations
  of Computer Science}, FOCS '04, pages 540--549, Washington, DC, USA, 2004.
  IEEE Computer Society.

\bibitem{Atallah}
Mikhail Atallah and Uzi Vishkin.
\newblock Finding euler tours in parallel.
\newblock {\em J. Comput. Syst. Sci.}, 29(3):330--337, December 1984.

\bibitem{Dem10}
Camil Demetrescu, Bruno Escoffier, Gabriel Moruz, and Andrea Ribichini.
\newblock Adapting parallel algorithms to the {W}-stream model, with
  applications to graph problems.
\newblock {\em Theor. Comput. Sci.}, 411(44-46):3994--4004, October 2010.

\bibitem{Dem09}
Camil Demetrescu, Irene Finocchi, and Andrea Ribichini.
\newblock Trading off space for passes in graph streaming problems.
\newblock {\em ACM Trans. Algorithms}, 6(1):6:1--6:17, December 2009.

\bibitem{Feigenbaum:2005:GPS:1132633.1132638}
Joan Feigenbaum, Sampath Kannan, Andrew McGregor, Siddharth Suri, and Jian
  Zhang.
\newblock On graph problems in a semi-streaming model.
\newblock {\em Theor. Comput. Sci.}, 348(2):207--216, December 2005.

\bibitem{McGregor}
Andrew McGregor.
\newblock Graph stream algorithms: A survey.
\newblock {\em SIGMOD Rec.}, 43(1):9--20, May 2014.

\bibitem{Ruhl}
Jan~Matthias Ruhl.
\newblock {\em Efficient Algorithms for New Computational Models}.
\newblock PhD thesis, Massachusetts Institute of Technology, Cambridge, MA,
  USA, 2003.
\newblock AAI0805714.

\bibitem{Sun}
Xiaoming Sun and David~P. Woodruff.
\newblock {Tight Bounds for Graph Problems in Insertion Streams}.
\newblock In Naveen Garg, Klaus Jansen, Anup Rao, and Jos{\'e} D.~P. Rolim,
  editors, {\em Approximation, Randomization, and Combinatorial Optimization.
  Algorithms and Techniques (APPROX/RANDOM 2015)}, volume~40 of {\em Leibniz
  International Proceedings in Informatics (LIPIcs)}, pages 435--448, Dagstuhl,
  Germany, 2015. Schloss Dagstuhl--Leibniz-Zentrum fuer Informatik.

\end{thebibliography}

\newpage

\section*{Appendix}

\begin{wrapfigure}{r}[-0.3cm]{0.3\textwidth}
  \fbox{
    \includegraphics[width=0.9\linewidth]{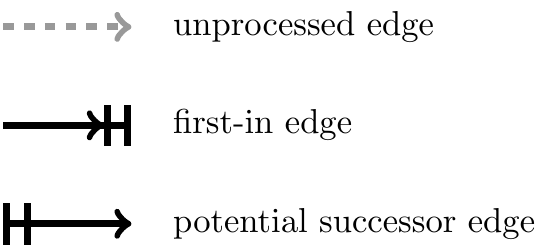}
  }
\end{wrapfigure}
On the following two pages
we present a working example for the method \textsc{Merge-Cycle}
that corresponds to the steps~\ref{step2}\ to~\ref{step6}\ in our
high level description.
Note that every node has at most one in-going first-in edge
and one out-going potential successor edge at a time.\\
 \capstartfalse
\begin{figure}[h]
  \begin{subfigure}{0.5\textwidth}
    \captionsetup{width=0.9\linewidth}
    \fbox{
      \includegraphics[width=0.9\linewidth]{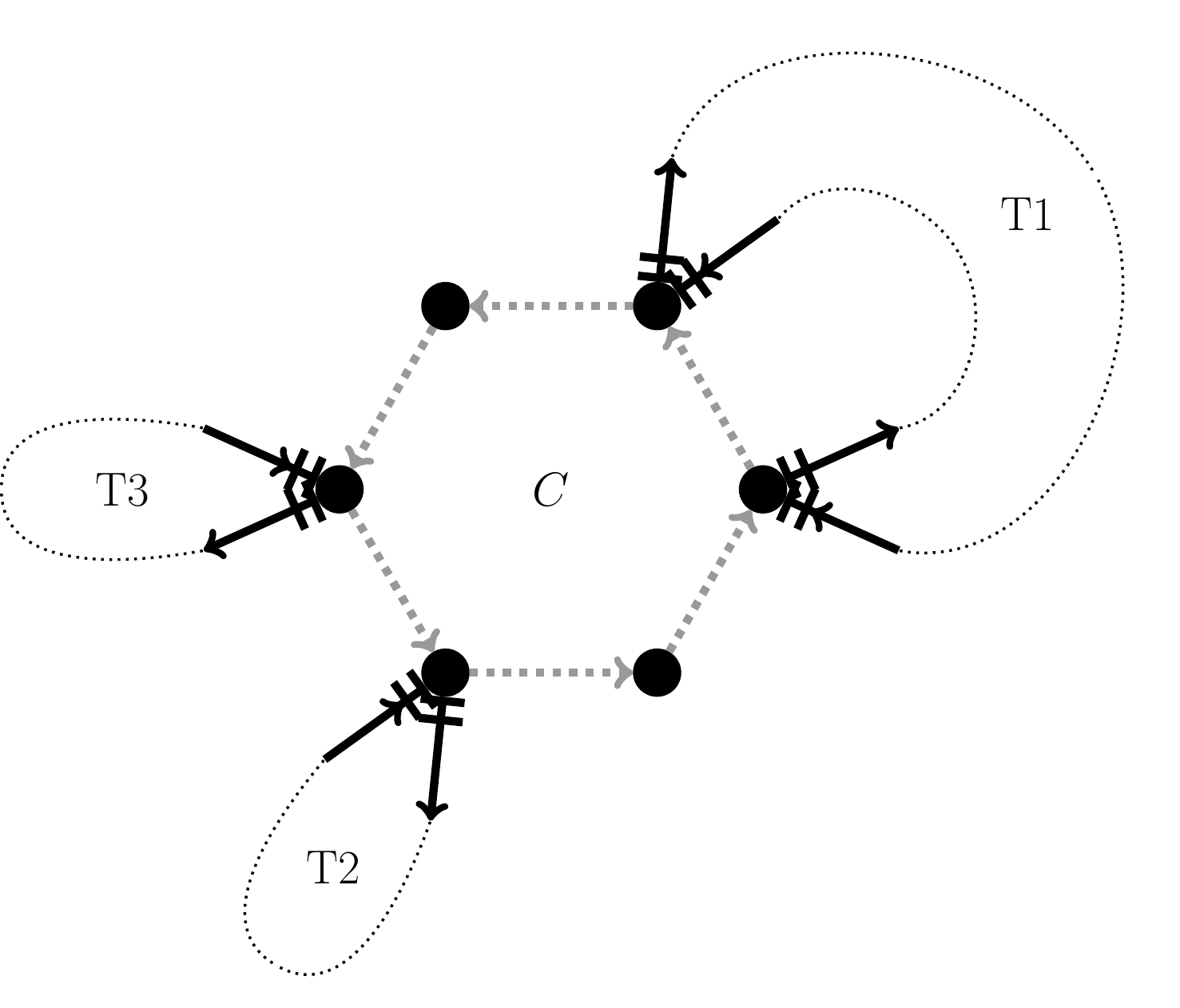}
    }
    \caption*{A cycle $C$ has been found.\newline \newline }
  \end{subfigure}
  \begin{subfigure}{0.5\textwidth}
    \captionsetup{width=0.9\linewidth}
    \fbox{
      \includegraphics[width=0.9\linewidth]{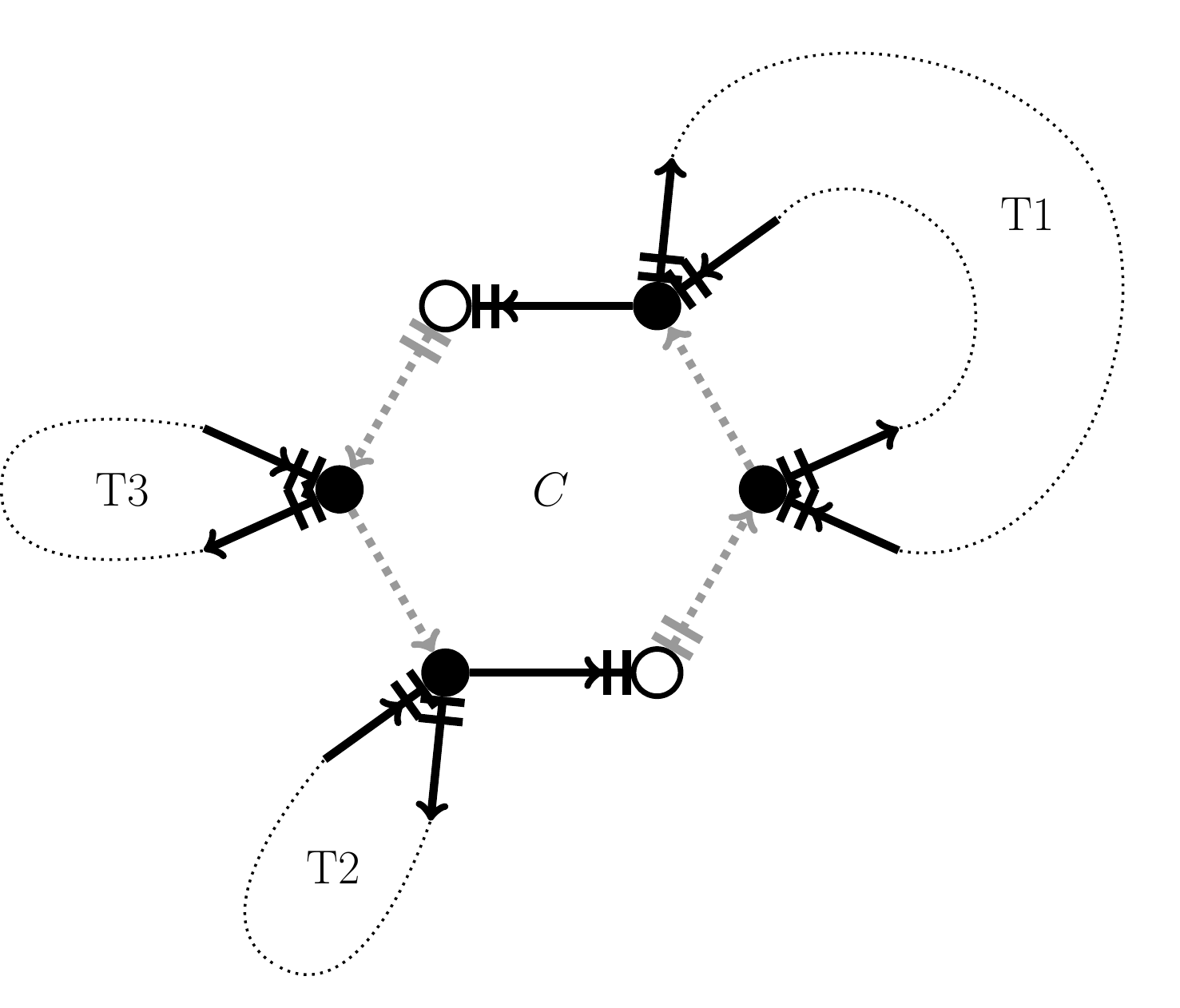}
    }
    \caption*{Step~\ref{step2}\ For every new node the in-going edge becomes a 
first-in edge and the outgoing edge becomes a potential successor.}
  \end{subfigure}
\end{figure}

\begin{figure}[h]
  \begin{subfigure}{0.5 \textwidth}
    \captionsetup{width=0.9\linewidth}
    \fbox{
      \includegraphics[width=0.9\linewidth]{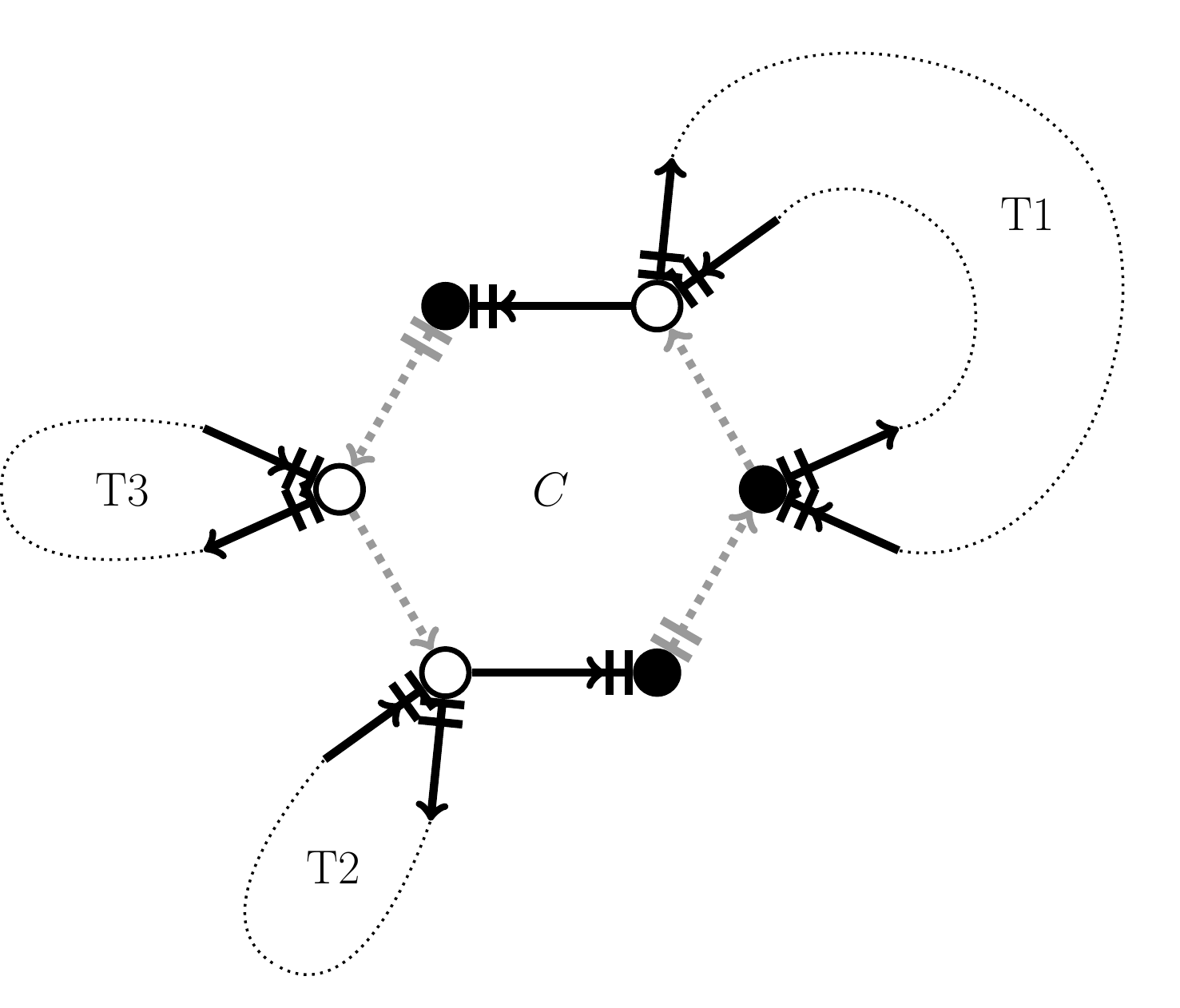}
    }
    \caption*{Step~\ref{step3}\ For each intersecting Tour $T_1,T_2,T_3$
      we choose one common node.\newline }
  \end{subfigure}
  \begin{subfigure}{0.5\textwidth}
    \captionsetup{width=0.9\linewidth}
    \fbox{
      \includegraphics[width=0.9\linewidth]{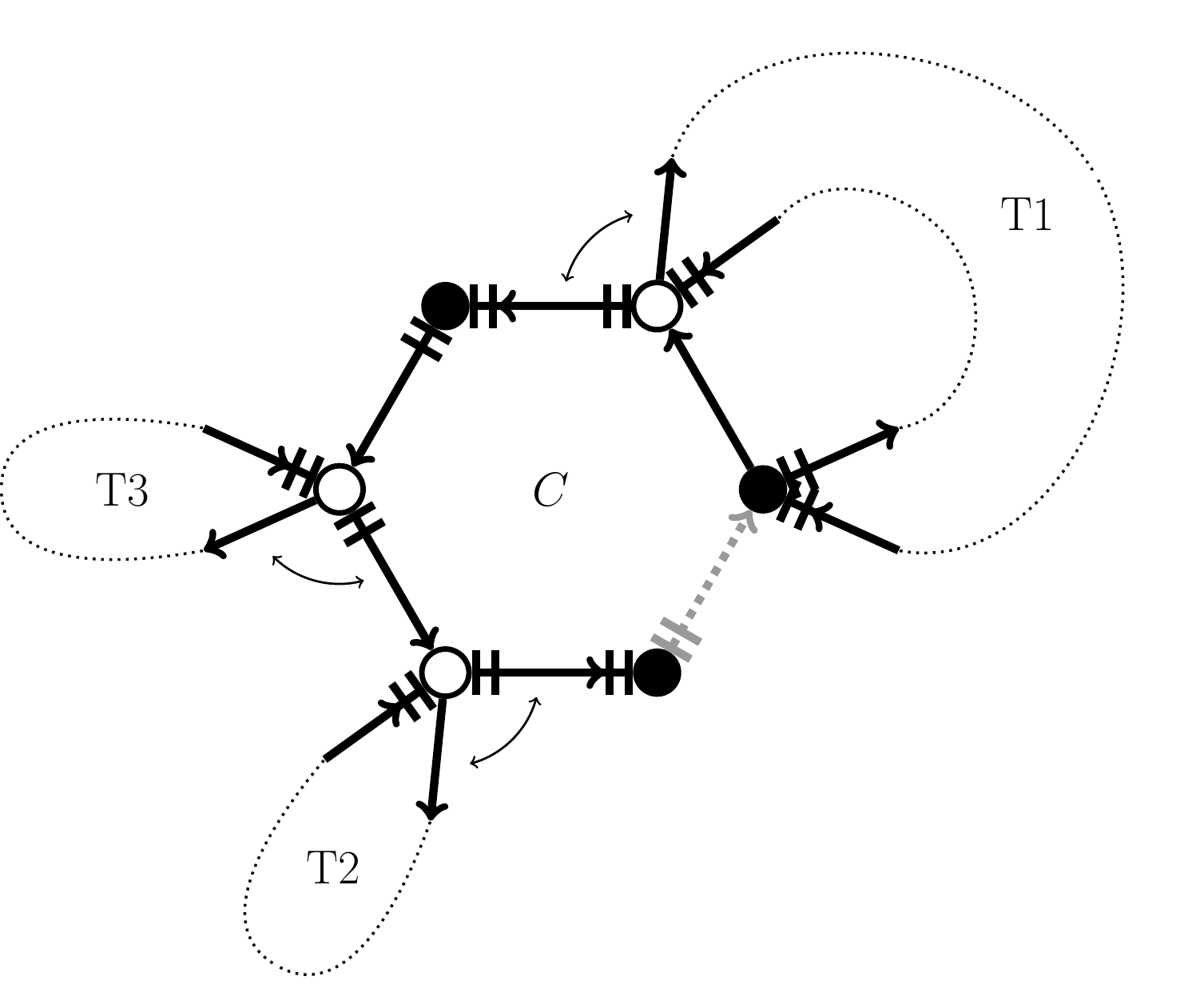}
    }
    \caption*{Step~\ref{step4}\ The successors of the chosen nodes are swapped 
with potential successor edges.}
  \end{subfigure}
\end{figure}

\begin{figure}[h]
  \begin{subfigure}{0.5 \textwidth}
    \captionsetup{width=0.9\linewidth}
    \fbox{
      \includegraphics[width=0.9\linewidth]{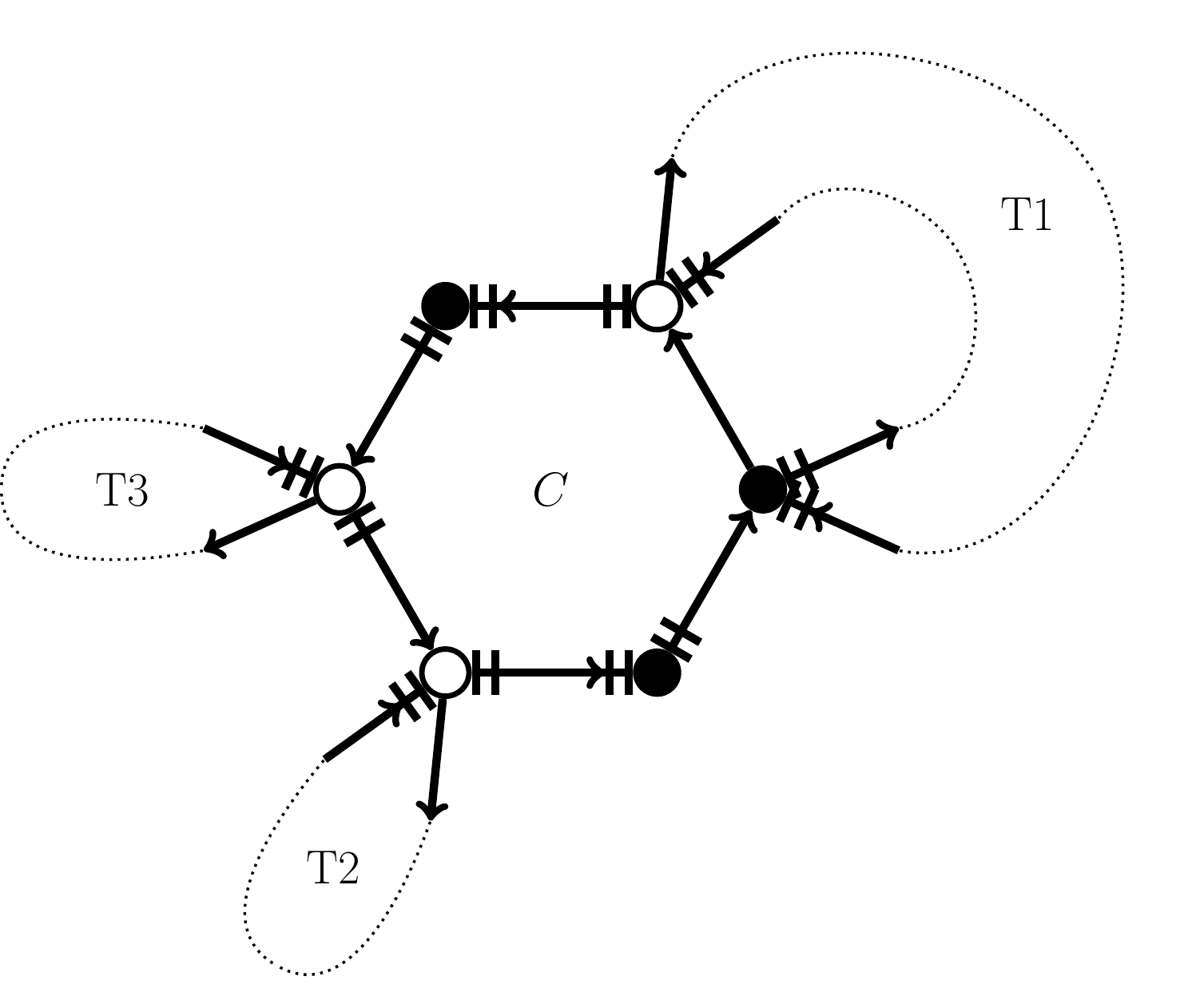}
    }
    \caption*{Step~\ref{step5}\ For the rest of the edges the successor stays 
the same.}
  \end{subfigure}
  \begin{subfigure}{0.5\textwidth}
    \captionsetup{width=0.9\linewidth}
    \fbox{
      \includegraphics[width=0.9\linewidth]{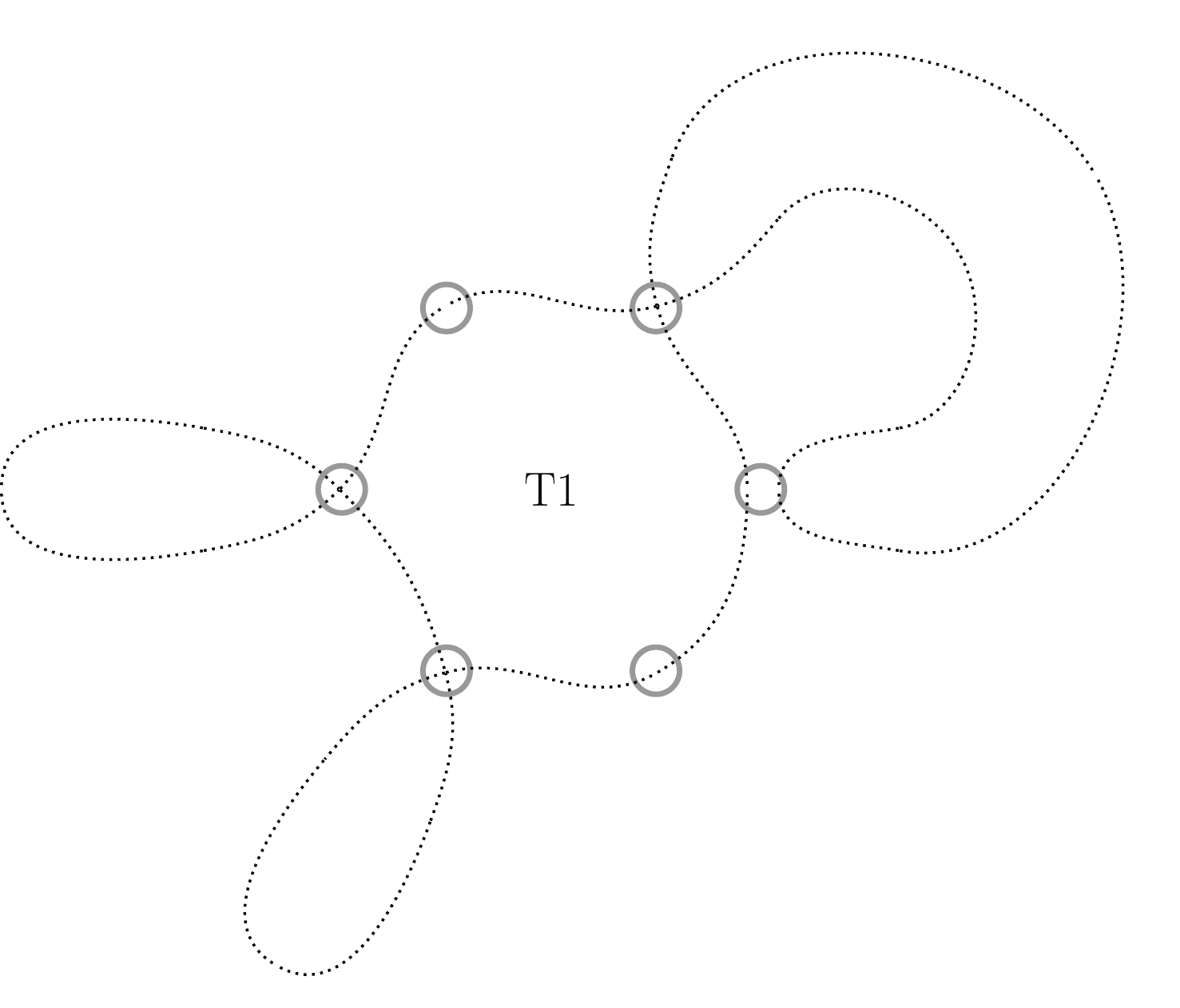}
    }
    \caption*{Step~\ref{step6}\ The tours and the cycle have been merged to one 
tour.}
  \end{subfigure}
\end{figure}

\end{document}